 \newcommand{\Acc}[2]{\mathcal{A}_{#2}}
\pgfplotsset{compat=1.14}
\newtheorem{theorem}{Theorem}
\newtheorem{lemma}{Lemma}
\newtheorem{definition}{Definition}
\newtheorem*{theorem*}{Theorem}
\theoremstyle{remark}
\newtheorem{remark}{Remark}
\newcommand{\defin}[1]{{\bfseries\upshape #1}}
\renewcommand{\le}{\leqslant}
\renewcommand{\ge}{\geqslant}
\renewcommand{\leq}{\leqslant}
\renewcommand{\geq}{\geqslant}
\newcommand{\Prob}{\mathbf{P}}
\newcommand{\A}{\mathcal{A}}
\newcommand{\cl}{\operatorname{cl}}
\newcommand{\f}{\ensuremath{{\mathcal{D}_\mathcal{A}}}}
\newcommand{\cog}{\mathcal{W}}
\newcommand{\D}{\ensuremath{\mathcal{D}}}
\newcommand{\bd}{\operatorname{bd}}
\newcommand{\interior}{\operatorname{int}}
\newcommand{\conv}{\mathrm{conv}}
\newcommand{\cone}{\operatorname{cone}}
\newcommand{\R}{\ensuremath{\mathbb{R}}}
\newcommand{\X}{\mathcal{X}}
\newcommand{\Pro}{\mathbf{P}}
\mathchardef\mhyphen="2D %hyphen menor em mathmode
\newcommand*{\defcol}{\mathrel{\vcenter{\baselineskip0.5ex \lineskiplimit0pt
                     \hbox{\scriptsize.}\hbox{\scriptsize.}}}%
                     =}%simbolo := melhor
\title{A note on the induction of comonotonic additive risk measures from acceptance sets}
\author{Samuel S. Santos\thanks{\textbf{Corresponding author}. \href{mailto:ssolgons@uwaterloo.ca}{ssolgons@uwaterloo.ca}. Room 3102, 200 University Avenue West, Waterloo, ON, Canada. The author thanks for the support from the Brazilian Coordination for the Improvement of Higher Education Personnel (CAPES) under grant number 88882.439088/2019-01. Declarations of interest: none.} \\ \small{University of Waterloo} \and Marlon R. Moresco\thanks{\hspace{0.1cm} \href{mailto:marlon.moresco@concordia.ca}{marlon.moresco@concordia.ca}. Room LB-901, J.W. McConnell Building (LB), 1400 De Maisonneuve Blvd. W., Montreal, QC, Canada. Declarations of interest: none.} \\ \small{Concordia University} \and Marcelo B. Righi\thanks{\hspace{0.1cm} \href{mailto:marcelo.righi@ufrgs.br}{marcelo.righi@ufrgs.br}. Av. João Pessoa, 52 - Centro Histórico, Porto
Alegre, RS, Brazil. The author thanks for the support from the Brazilian National Council for Scientific and Technological Development (CNPq) under grant number 302614/2021-4. Declarations of interest: none.}\\ \small{Federal University of Rio Grande do Sul} \and Eduardo Horta \thanks{\hspace{0.1cm} \href{mailto:eduardo.horta@ufrgs.br}{eduardo.horta@ufrgs.br}. Av. Bento Gonçalves, 9500 - Agronomia, Porto Alegre, RS, Brazil. The author thanks for the support from the Brazilian National Council for Scientific and Technological Development (CNPq) under grant number 438642/2018-0. Declarations of interest: none.}\\ \small{Federal University of Rio Grande do Sul}}
\date{}
\begin{document}
\maketitle

\begin{abstract}
 We present simple general conditions on the acceptance sets under which their induced monetary risk and deviation measures are comonotonic additive. We show that acceptance sets induce comonotonic additive risk measures if and only if the acceptance sets and their complements are stable under convex combinations of comonotonic random variables. A generalization of this result applies to risk measures that are additive for random variables with \textit{a priori} specified dependence structures, e.g., perfectly correlated, uncorrelated, or independent random variables.
\end{abstract}

\maketitle

\section{Introduction}

The notion of risk is rooted in two fundamental concepts: the potential for adverse outcomes and the variability in expected results. Traditionally, risk has been understood as a measure of dispersion, such as variance, in line with the second concept \citep{markowitz1952}. However, the occurrence of critical events has brought attention to tail risk measurement, exemplified by well-known measures like Value at Risk (VaR) and Expected Shortfall (ES), which account for the possibility of extreme events, thus incorporating the first concept. \cite{danielsson2001academic} and \cite{embrechts2014academic} are remarkable references in this regard.

This study investigates the relationship between acceptance sets and risk / deviation measures, focusing on the property of comonotonic additivity. Roughly speaking, two random variables are comonotonic if the variability of one never offsets the variability of the other, that is, they move in the same direction. A financial intuition of the property of comonotonic additivity is the following: joining two comonotonic positions provides neither diversification benefits nor brings harm to the portfolio. Comonotonic additivity occupies a central place in the theory of risk measures
(seminal papers in this regard are \cite{wang1997}, \cite{yaari1987dual}, \cite{kusuoka}, and \cite{acerbi2002spectral}).

Acceptance sets are criteria used by financial regulators to distinguish between permissible and impermissible positions held by financial firms. However, acceptance sets alone do not provide direct guidance on how to convert non-permissible positions into permissible ones. This is the role of risk measures, which assign extended real values to quantify the risk (usually the tail risk) of financial positions. For non-permissible financial positions, monetary risk measures indicate the minimum amount of cash addition or assets addition required to make these positions permissible. This idea goes back to \cite{artzner1999}. For a review, see chapter 4 of \cite{follmer2016stochastic}. On the other hand, deviation measures may not reflect tail risk, as they are designed to quantify deviation. \cite{rockafellar06} is a landmark work in the axiomatic study of deviation measures, and \cite{pflug2007modeling} provide a handbook treatment. In analogy to risk measures, \cite{moresco2023minkowski} associated deviation measures to acceptance sets, and showed that generalized deviation measures (in the sense of \cite{rockafellar06}) represent how much a position must be shrunk or deleveraged for it to become acceptable.  As further references on the topic, \cite{nendel2021decomposition} and \cite{righi2019composition} studied the connection between risk, deviation measures, and premium principles. Also, \cite{grechuk2009maximum} used deviation measures to define restrictions on problems of maximum entropy.

From an axiomatic point of view, the properties of a risk measure directly translate into attributes of its acceptance set. It is well known that a risk measure is law-invariant, convex, positive homogeneous, and star-shaped if and only if its acceptance set is law-invariant, convex, conic, and star-shaped. However, the literature has no correspondence for comonotonic additivity beyond an attempt in finite probability spaces from \cite{rieger2017characterization}. In fact, additivity in general was never approached, to the best of our knowledge, through the perspective of acceptance sets.   

The additivity of a risk measure means that it is just as risky to have two positions added together in the same portfolio as it is to have them separated. If there were some diversification benefits in holding them together, we would require the acceptance set and the risk measure to be convex. For a discussion on convexity, see \cite{dhaene2008too}, \cite{tsanakas2009split}, and \cite{rau2019bigger}. If it were more risky to hold them together, the risk measure would be concave. From the perspective of acceptance sets, it translates into requiring the acceptance set's complement to be convex. Now, if the risk of two positions is the same regardless of whether they are in the same portfolio or not, then a combination of the two aforementioned concepts emerges. In this case, the risk measure should be both convex and concave, and both the acceptance set and its complement should be convex.

It is well known that the only linear risk measure is the expectation, and in this case, the above rationale holds trivially because both the acceptance set and its complement are half-spaces. However, we are interested in the additive property for random variables with specific dependence structures, such as independent, uncorrelated, and mainly, comonotonic random variables; that is, we do not require the risk measure to be additive in its whole domain, but just for specific random variables which, under some criterion, neither provide diversification benefit nor harm. 

Our main results show that this connection occurs for monetary and deviation measures. While the concept of monetary and deviation measures are similar, the technical tools to obtain those results are significantly different. In fact, up until recently, there was no such thing as an acceptance set for deviation measures. \cite{moresco2023minkowski} established the notion of acceptance sets for deviation measures, to which a crucial property is positive homogeneity. Since the main focus of this study is comonotonic additivity, which is a stronger property than positive homogeneity, we will exclusively consider deviation measures that satisfy the former condition. We focus on monetary risk measures in \Cref{Risk}, and deviation measures in \Cref{deviations}.

Regarding basic notation, let $(\Omega, \mathcal{F},\Prob)$ be a probability space and $L^{0}\defcol L^{0}(\Omega, \mathcal{F},\Prob)$ the space of equivalence classes of random variables (under the $\Prob \mhyphen$a.s. relation) and $L^{\infty}\defcol L^{\infty}(\Omega, \mathcal{F},\Prob)=\{X \in L^{0}: \Vert X \Vert_{\infty}< +\infty \}$, where $\Vert X \Vert _{\infty}= \inf \{m \in \mathbb{R}: |X|<m\}$ for all $X \in L^{0}$. Equalities and inequalities must be understood in the $\Prob \mhyphen$a.s. sense. For generality, we work on a Hausdorff topological vector space $\X$ such that $ L^\infty \subseteq \mathcal{X} \subseteq L^0$. The elements $X \in \mathcal{X}$ represent discounted net financial payoffs. We adopt the identify $\mathbb{R}\equiv\{X \in \mathcal{X}:X=c\text{ for some }c \in \mathbb{R}\}$. 
For any subset $A \subseteq \X$, we denote $\conv (A)$, $\cone (A)$, $A^\complement$ the convex hull, conic hull, and complement of $A$, respectively. Also, for any two sets $A,B \subseteq \mathcal{X}$, we denote $A+B=\{X \in \mathcal{X}: X=Y+Z,Y \in A, Z \in B\}$. It is valid noticing that if $A$ is non-empty, $0 \in \cone(A)$.  
Further, two random variables $X$ and $Y$ are \defin{comonotonic} if
\begin{equation}\label{def.comon}
    (X(\omega)-X(\omega'))(Y(\omega)-Y(\omega'))\ge 0\quad \Prob\otimes \Prob\mhyphen a.s. 
\end{equation}
% Equivalently, we refer to the comonotonicity of $\boldsymbol{X} \in \mathcal{X}^{n}$ by saying that the random variables $X_{1},X_{2},\cdots,X_{n}$ are comonotonic. 

% As is customary in the literature, we will restrict our attention to comonotonic pairs of random variables.
The concept of comonotonicity dates back at least to \cite{hardy1934inequalities}. \cite{ruschendorf2013mathematical} and \cite{dhaene2020comonotonic} present further characterizations of comonotonic random variables.

\section{Monetary risk measures}\label{Risk}

We begin with some terminology on acceptance sets and monetary risk measures.

\begin{definition}\label{def.accept}
A nonempty set $\mathcal{A}\subseteq \mathcal{X}$ is called an acceptance set. It is a \defin{monetary acceptance set} if satisfies the following:
\begin{enumerate}[series = axioms.A,label=\textbf{\Alph*.}]
    \item (Monotonicity) $\mathcal{A}$ is \defin{monotone} if $X \in \mathcal{A}$ and $X\le Y$ implies $Y \in \mathcal{A}$.
     \item (Normalization) $\mathcal{A}$ is \defin{normalized} if $\inf\{m \in \mathbb{R}:m \in \mathcal{A}\}=0$.
\end{enumerate}
In addition, an acceptance set may fulfill: \begin{enumerate}[resume = axioms.A,label= \textbf{\Alph*.}]
    %\item (Convexity) $\mathcal{A}$ is \defin{convex} if $\lambda X + (1-\lambda)Y \in \mathcal{A}$ whenever $X,Y \in \mathcal{A}$ and $\lambda \in [0,1]$;
   
     % \item (Conicity) $\mathcal{A}$ is \defin{conic} if $\lambda \mathcal{A}\subseteq \mathcal{A}$ for all $\lambda \ge 0$.
    \item (Convexity) $\mathcal{A}$ is \defin{convex} if $\lambda \mathcal{A} + (1-\lambda)\mathcal{A} \subseteq \mathcal{A}$ whenever $\lambda \in [0,1]$.
     % \item (Comonotonic Convexity) $\mathcal{A}$ is \defin{comonotonic convex} if, for all comonotonic pairs $(X,Y)$ such that $X,Y \in \mathcal{A}$ and $\lambda \in [0,1]$, it holds that $\lambda X + (1-\lambda)Y \in \mathcal{A}$.
  \end{enumerate}
We say that any set is \defin{comonotonic convex} if $X,Y \in \A $ implies  $\lambda X+ (1-\lambda )Y  \in \A$ for all comonotonic pairs $X,Y \in \mathcal{X}$.
\end{definition}

\begin{definition}\label{def.risk.measure}
A functional $\rho:\mathcal{X}\rightarrow \mathbb{R}\cup\{\infty\}$ is called a \defin{risk measure} if it satisfies:
\begin{enumerate}[series = axioms]
    \item (Monotonicity) $\rho$ is \defin{monotone} if $\rho(Y)\le \rho(X)$ whenever $X\le Y$ for $X,Y \in \mathcal{X}$.
    \item (Cash invariance) $\rho$ is \defin{cash invariant} if $\rho(X+m)=\rho(X)-m$ for any $X \in \mathcal{X}$ and $m \in \mathbb{R}$.
      \item (Normalization) $\rho$ is \defin{normalized} if $\rho(0)=0$.
\end{enumerate}
In addition, a functional may fulfil the following for some set $C\subseteq\X$:
\begin{enumerate}[resume = axioms]
    % \item (Positive homogeneity) $f$ is \defin{positive homogeneous} if $f(aX)=af(X)$ for any $X \in \mathcal{X}$ and any $a\ge 0$.
     \item (Convexity) $\rho$ is \defin{convex in $C$} if $\rho(\lambda X + (1-\lambda)Y) \le \lambda \rho(X)+(1-\lambda)\rho(Y)$ for all $\lambda \in [0,1]$ and $X,Y \in C$.
   
        \item (Concavity) $\rho$ is \defin{concave in $C$} if $\rho(\lambda X + (1-\lambda)Y) \geq \lambda \rho(X)+(1-\lambda)\rho(Y)$ for all $\lambda \in [0,1]$ and $X,Y \in C$.
          \item (Additivity) $\rho$ is \defin{additive in $C$} if $\rho( X + Y) = \rho(X)+\rho(Y)$ for all $X,Y \in C$.
\end{enumerate}
If $C = \mathcal{X}$, we simply refer to the functional as convex, concave or additive. If $\rho$ is convex/concave/additive for comonotonic pairs, then we say it is comonotonic convex/concave/additive.

\end{definition}

\begin{remark}
    Since $0$ is comonotonic to any $X \in \mathcal{X}$, it is easy to see that, if $\rho$ is comonotonic convex, then $\rho(\lambda X)\le \lambda \rho(X)$ for $\lambda \in [0,1]$ and $\rho(\lambda X)\ge \lambda\rho(X)$ for $\lambda>1$. Risk measures satisfying this property are called \textit{star-shaped}. For theory and applications of star-shaped risk measures, see \cite{castagnoli2022star}, \cite{righi2021star}, \cite{righi2022star}, and \cite{moresco2022link}.
\end{remark}

\begin{definition}\label{def.induced}
Let $\rho$ be a risk measure and $\mathcal{A}$ a monetary acceptance set.
\begin{enumerate}
    \item The acceptance set induced by $\rho$ is defined as
\begin{equation}
    \mathcal{A}_{\rho}\defcol\{X \in \mathcal{X}:\rho(X)\le 0\}.
\end{equation}
    \item The risk measure induced by $\mathcal{A}$ is defined as
    \begin{equation}\label{risk.measure}
  \rho_{\mathcal{A}}(X)\defcol\inf\{m \in \mathbb{R}:X+m \in \mathcal{A}\}, \:\forall\: X \in \mathcal{X}.
\end{equation}
\end{enumerate}
\end{definition}

As shown, for instance, in \cite{artzner1999}, \cite{cheridito2009risk}, and \cite{kaina2009convex}, there exist direct links between acceptance sets and risk measures. The following relations between risk measures and acceptance sets will be used throughout the paper:
\begin{lemma}\label{lemma.basic}
    (Propositions 4.6 - \cite{follmer2016stochastic}; Lemma 2.5 - \cite{farkas2014beyond}) Let $\rho$ be a risk measure and let $\mathcal{A}$ be a monetary acceptance set. Then we have the following:
    %\mr{Temos que ajustar isso pra um $\X$ qualquer, como está é para o $L^\infty$. }:
    \begin{enumerate}
        \item \label{lemma.basic.item.1} $\rho(X)=\rho_{\mathcal{A_{\rho}}}(X)$ for all $X \in \mathcal{X}$.
        \item \label{lemma.basic.item.2} $\{ X \in \X : \rho_\A (X) <0 \} \subseteq \A \subseteq \mathcal{A}_{\rho_{\mathcal{A}}} \subseteq \cl (\A)  $, where $\cl (\A)$ denotes the closure of $\A$.
        %equals the closure in the supremum norm of $\mathcal{A}$. Therefore, if $\mathcal{A}$ is closed, then $\mathcal{A}=\mathcal{A}_{\rho_{\mathcal{A}}}$.
         \item \label{lemma.basic.item.2b} If $\mathcal{A}$ is convex, then $\rho_{\mathcal{A}}$ is convex. Conversely, if $\rho$ is convex, then $\mathcal{A}_{\rho}$ is convex.
    % \item \label{lemma.basic.item.2c} If $\mathcal{A}$ is a cone, then $\rho_{\mathcal{A}}$ is positive homogeneous. Conversely, if $\rho$ is positive homogeneous, then $\mathcal{A}_{\rho}$ is conic.
        %\item \label{lemma.basic.item.3} $\mathcal{A}^\complement_\rho = \{ X \in \X : \rho_{\mathcal{A}}(X)>0 \} $.
        % \item \label{lemma.basic.item.4} If $\mathcal{A}$ is normalized, then $\rho_{\mathcal{A}}$ is normalized. If $\rho$ is normalized, $\mathcal{A}_{\rho}$ is normalized.
    \end{enumerate}
\end{lemma}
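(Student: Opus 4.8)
The plan is to treat the three items separately, each reducing to an unfolding of the definitions of $\mathcal{A}_\rho$ and $\rho_\mathcal{A}$ against the relevant axioms. For item \ref{lemma.basic.item.1}, I would expand $\rho_{\mathcal{A}_\rho}(X)=\inf\{m\in\R:X+m\in\mathcal{A}_\rho\}=\inf\{m\in\R:\rho(X+m)\le 0\}$ and then invoke cash invariance, $\rho(X+m)=\rho(X)-m$, so that the defining constraint $\rho(X+m)\le 0$ is equivalent to $m\ge\rho(X)$. The infimum of $\{m\in\R:m\ge\rho(X)\}$ is exactly $\rho(X)$ (with the convention that an empty constraint set gives $+\infty=\rho(X)$), which yields the identity. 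Cash invariance is the only axiom needed here.

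For item \ref{lemma.basic.item.2} I would prove the three inclusions in turn. For $\{X:\rho_\mathcal{A}(X)<0\}\subseteq\mathcal{A}$: if $\rho_\mathcal{A}(X)<0$ there is some $m<0$ with $X+m\in\mathcal{A}$; since $m<0$ gives $X+m\le X$, monotonicity of $\mathcal{A}$ yields $X\in\mathcal{A}$. For $\mathcal{A}\subseteq\mathcal{A}_{\rho_\mathcal{A}}$: if $X\in\mathcal{A}$ then $0$ belongs to $\{m\in\R:X+m\in\mathcal{A}\}$, so $\rho_\mathcal{A}(X)\le 0$, i.e. $X\in\mathcal{A}_{\rho_\mathcal{A}}$. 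The last inclusion $\mathcal{A}_{\rho_\mathcal{A}}\subseteq\cl(\mathcal{A})$ is the only step with a topological flavour: given $\rho_\mathcal{A}(X)\le 0$ and any $\varepsilon>0$, there is $m<\varepsilon$ with $X+m\in\mathcal{A}$, and since $X+m\le X+\varepsilon$, monotonicity gives $X+\varepsilon\in\mathcal{A}$; letting $\varepsilon\downarrow 0$ and using continuity of the translation $\varepsilon\mapsto X+\varepsilon$ in the topological vector space $\mathcal{X}$, I conclude $X\in\cl(\mathcal{A})$.

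For item \ref{lemma.basic.item.2b}, the forward direction uses that for any $m_X>\rho_\mathcal{A}(X)$ and $m_Y>\rho_\mathcal{A}(Y)$ one has $X+m_X\in\mathcal{A}$ and $Y+m_Y\in\mathcal{A}$ (from monotonicity together with the definition of the infimum); convexity of $\mathcal{A}$ then places $\lambda X+(1-\lambda)Y+(\lambda m_X+(1-\lambda)m_Y)$ in $\mathcal{A}$, so $\rho_\mathcal{A}(\lambda X+(1-\lambda)Y)\le\lambda m_X+(1-\lambda)m_Y$, and taking the infimum over $m_X,m_Y$ gives convexity of $\rho_\mathcal{A}$ (the case of infinite values being trivial). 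The converse is immediate: if $\rho$ is convex and $X,Y\in\mathcal{A}_\rho$, then $\rho(\lambda X+(1-\lambda)Y)\le\lambda\rho(X)+(1-\lambda)\rho(Y)\le 0$, so the convex combination lies in $\mathcal{A}_\rho$. I expect the only genuinely non-routine point to be the closure inclusion in item \ref{lemma.basic.item.2}, which cannot be obtained by algebra alone and requires the continuity of translations in $\mathcal{X}$; the remaining steps are direct consequences of cash invariance, monotonicity, and convexity.
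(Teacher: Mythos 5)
Your proof is correct, and it is essentially the standard argument: the paper itself does not prove this lemma but cites it from \cite{follmer2016stochastic} (Proposition 4.6) and \cite{farkas2014beyond} (Lemma 2.5), and your three steps---cash invariance for item 1, monotonicity plus the translation-continuity argument for the chain of inclusions in item 2, and the infimum-over-acceptable-translates argument for item 3---match the proofs given in those references. The only point worth flagging is that your convexity argument implicitly uses the convention for $\rho_{\mathcal{A}}$ taking the value $+\infty$ (which you note) and, strictly speaking, the possibility $\rho_{\mathcal{A}}(X)=-\infty$ on unbounded positions, but this is excluded by the paper's codomain convention $\mathbb{R}\cup\{\infty\}$ and does not affect the argument.
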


We use the following auxiliary function towards our way to this section's main result. Notice that it corresponds to the smallest upper bound for the amount of cash that can be added to some position without making it acceptable. 
\begin{definition}
Let $\mathcal{A} \subseteq \X$ be an acceptance set, then the functional $\psi_{\mathcal{A}^\complement} : \mathcal{X} \rightarrow \mathbb{R}\cup\{-\infty,+\infty\}$ induced by $\mathcal{A}^{\complement}$ be defined as
\begin{equation}
    \psi_{\mathcal{A}^{\complement}}(X)\defcol\sup\{m \in \mathbb{R}:X+m \in \mathcal{A}^{\complement}\},\: \forall\:X \in \mathcal{X}.
\end{equation}
\end{definition}

\begin{lemma}\label{risk.corisk}
Let $\mathcal{A}$ be a monetary acceptance set. Then $\rho_{\mathcal{A}}(X)=\psi_{\mathcal{A}^{\complement}}(X)$ for all $X \in \mathcal{X}$. 
\end{lemma}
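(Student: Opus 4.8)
The plan is to fix an arbitrary $X \in \mathcal{X}$ and to compare directly the two sets of real constants appearing in the definitions. Set
$U_X \defcol \{m \in \mathbb{R} : X + m \in \mathcal{A}\}$ and $L_X \defcol \{m \in \mathbb{R} : X + m \in \mathcal{A}^{\complement}\}$,
so that by definition $\rho_{\mathcal{A}}(X) = \inf U_X$ and $\psi_{\mathcal{A}^{\complement}}(X) = \sup L_X$. The elementary but decisive observation is that, since every element of $\mathcal{X}$ lies in exactly one of $\mathcal{A}$ and $\mathcal{A}^{\complement}$, the sets $U_X$ and $L_X$ form a partition of $\mathbb{R}$.

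Next I would exploit monotonicity of $\mathcal{A}$ to show that $U_X$ is an upper set: if $m \in U_X$ and $m' \ge m$, then $X + m' \ge X + m \in \mathcal{A}$, whence $X + m' \in \mathcal{A}$, i.e. $m' \in U_X$. Consequently its complement $L_X$ is a lower set, and every element of $L_X$ lies weakly below every element of $U_X$; indeed, an element of $L_X$ strictly above some element of $U_X$ would, by the upper-set property, itself belong to $U_X$, a contradiction. This immediately yields $\sup L_X \le \inf U_X$.

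For the reverse inequality I would argue by contradiction: if $\sup L_X < \inf U_X$, then any constant $m$ with $\sup L_X < m < \inf U_X$ would belong to neither $L_X$ (as $m > \sup L_X$) nor $U_X$ (as $m < \inf U_X$), contradicting that these sets partition $\mathbb{R}$. Hence $\inf U_X = \sup L_X$, which is precisely $\rho_{\mathcal{A}}(X) = \psi_{\mathcal{A}^{\complement}}(X)$. The degenerate cases $U_X = \mathbb{R}$ (equivalently $L_X = \emptyset$) and $U_X = \emptyset$ are absorbed by the conventions $\inf \emptyset = +\infty$ and $\sup \emptyset = -\infty$, under which the equality persists.

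I expect the argument to be essentially routine, so there is no genuine obstacle; the only point requiring a little care is the bookkeeping at the cut, namely that the common value $\inf U_X = \sup L_X$ is correct irrespective of whether the boundary constant itself falls in $\mathcal{A}$ or in $\mathcal{A}^{\complement}$. Since suprema and infima are insensitive to the membership of that single endpoint, this causes no difficulty. It is worth noting that monotonicity (Axiom~A) is the only structural property doing real work here; normalization is not needed for the equality itself.
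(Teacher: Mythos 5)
Your proof is correct and follows essentially the same route as the paper's: monotonicity makes $\{m : X+m \in \mathcal{A}\}$ an upper interval and its complement a lower interval, and since these two sets partition $\mathbb{R}$, the infimum of the former equals the supremum of the latter. You merely spell out in detail (the two inequalities, the degenerate cases) what the paper states in one line.
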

\begin{proof}

From the monotonicity of monetary acceptance sets we have, for any $X \in \X$, that the real sets $\{ m \in \R : X +m \in \A \}$ and $\{ m \in \R : X +m \in \A^\complement \}$ are intervals that partition the real line. Hence, it follows that $\psi_{\mathcal{A}^{\complement}}(X) = \sup\{m \in \mathbb{R}:X+m \in \mathcal{A}^{\complement}\} = \inf \{ m \in \R : X +m \in \A \} = \rho_\A(X)$.
\end{proof}

%\color{red}
%\begin{remark}
 %   Notice that $\rho_{\mathcal{A}}$ and $\psi_{\mathcal{A}^c}$ are finite for all monetary acceptance set $\mathcal{A}$ if and only if $\mathcal{X}$ does not contain unbounded random variables. If $\mathcal{X}$ contains unbounded random variables, e.g. $X\in \mathcal{X}$, then $\rho_{\mathcal{A}_1}(X)=\psi_{\mathcal{A}_{1}^c}(X)=+\infty$ for $\mathcal{A}_{1}=\mathcal{X}^{+}$; also, one would have $\rho_{\mathcal{A}_2}(X)=\psi_{\mathcal{A}_{2}^c}(X)=-\infty$ for $\mathcal{A}_{2}=\{X\in \mathcal{X}:\Prob(X>0)>0\}$.\sss{Por outro lado, a Proposicao 3.1 de \cite{farkas2014beyond} diz que $\rho_{\mathcal A}$ eh sempre finito no nosso setup.}
%\end{remark}
%\color{black}

The next result gives us sufficient conditions to induce convex, concave and additive risk measures. As formally stated in \Cref{def.accept2}, a set $C\subseteq \mathcal{X}$ is stable under scalar addition if $C+\mathbb R =C$.

\begin{theorem}\label{main.lemma}
Let $\mathcal{A}$ be a monetary acceptance set and $C \subseteq \X$ be stable under scalar addition.
\begin{enumerate}
    \item \label{main.lemma.s.1} If $\mathcal{A}\cap C $ is convex, then $\rho_{\mathcal{A}}$ is convex in $C$.
    \item \label{main.lemma.s.2} If $\mathcal{A}^{\complement} \cap C $ is convex, then  $\rho_{\mathcal{A}}$ is concave in $C$. 
    \item \label{main.lemma.s.3} If $\mathcal{A}^{\complement} \cap C $ and $\mathcal{A}\cap C $ are convex, then $\rho_{\mathcal{A}}$ is additive in $C$. 
\end{enumerate}
Furthermore, the converse implications hold if $\mathcal{A}$ is closed and $C$ is convex.
\end{theorem}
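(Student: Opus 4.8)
The plan is to handle the three forward implications through the two ``dual'' representations of $\rho_{\mathcal A}$ and then read off the converses from the closedness of $\mathcal A$.

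For item \ref{main.lemma.s.1}, I would fix $X,Y\in C$ and $\lambda\in[0,1]$ and pick any $m_X>\rho_{\mathcal A}(X)$ and $m_Y>\rho_{\mathcal A}(Y)$. By the definition of $\rho_{\mathcal A}$ as an infimum together with monotonicity of $\mathcal A$, both $X+m_X$ and $Y+m_Y$ lie in $\mathcal A$; since $C+\mathbb R=C$ they also lie in $C$, hence in $\mathcal A\cap C$. Convexity of $\mathcal A\cap C$ then yields $\lambda(X+m_X)+(1-\lambda)(Y+m_Y)=\lambda X+(1-\lambda)Y+\big(\lambda m_X+(1-\lambda)m_Y\big)\in\mathcal A$, so that $\rho_{\mathcal A}(\lambda X+(1-\lambda)Y)\le \lambda m_X+(1-\lambda)m_Y$; letting $m_X\downarrow\rho_{\mathcal A}(X)$ and $m_Y\downarrow\rho_{\mathcal A}(Y)$ gives convexity in $C$. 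Item \ref{main.lemma.s.2} is the mirror image: I would invoke \Cref{risk.corisk} to write $\rho_{\mathcal A}=\psi_{\mathcal A^\complement}$ as a supremum, observe that monotonicity of $\mathcal A$ makes $\mathcal A^\complement$ downward closed, choose $m_X<\rho_{\mathcal A}(X)$ and $m_Y<\rho_{\mathcal A}(Y)$ so that $X+m_X,Y+m_Y\in\mathcal A^\complement\cap C$, and use convexity of $\mathcal A^\complement\cap C$ to obtain $\rho_{\mathcal A}(\lambda X+(1-\lambda)Y)\ge \lambda m_X+(1-\lambda)m_Y$, concluding with $m_X\uparrow\rho_{\mathcal A}(X)$ and $m_Y\uparrow\rho_{\mathcal A}(Y)$.

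For item \ref{main.lemma.s.3} I would combine the first two: under both hypotheses $\rho_{\mathcal A}$ is simultaneously convex and concave in $C$, hence affine along every segment with endpoints in $C$, i.e.\ $\rho_{\mathcal A}(\lambda U+(1-\lambda)V)=\lambda\rho_{\mathcal A}(U)+(1-\lambda)\rho_{\mathcal A}(V)$ for $U,V\in C$. Additivity is then extracted by evaluating this affine identity twice for fixed $X,Y\in C$: once on the segment $[X,Y]$, giving $\rho_{\mathcal A}\big(\tfrac{X+Y}{2}\big)=\tfrac12\rho_{\mathcal A}(X)+\tfrac12\rho_{\mathcal A}(Y)$, and once on the segment $[\,0,X+Y\,]$, giving $\rho_{\mathcal A}\big(\tfrac{X+Y}{2}\big)=\tfrac12\rho_{\mathcal A}(X+Y)+\tfrac12\rho_{\mathcal A}(0)=\tfrac12\rho_{\mathcal A}(X+Y)$, where $\rho_{\mathcal A}(0)=0$ is the normalization inherited from $\mathcal A$. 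Equating the two expressions gives $\rho_{\mathcal A}(X+Y)=\rho_{\mathcal A}(X)+\rho_{\mathcal A}(Y)$. I expect this final step to be the main obstacle, since convex combinations alone never reach the point $X+Y$ lying off the segment $[X,Y]$: the second evaluation requires both $0$ and $X+Y$ to belong to $C$, so the argument genuinely leans on $C$ being closed under addition and containing the origin, and not merely on stability under scalar addition.

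For the converses I would assume $\mathcal A$ closed and $C$ convex. Closedness upgrades \Cref{lemma.basic} to the identity $\mathcal A=\mathcal A_{\rho_{\mathcal A}}=\{X\in\mathcal X:\rho_{\mathcal A}(X)\le0\}$ (because $\mathcal A\subseteq\mathcal A_{\rho_{\mathcal A}}\subseteq\cl(\mathcal A)=\mathcal A$), so that $\mathcal A\cap C=\{X\in C:\rho_{\mathcal A}(X)\le 0\}$ and $\mathcal A^\complement\cap C=\{X\in C:\rho_{\mathcal A}(X)>0\}$. If $\rho_{\mathcal A}$ is convex in $C$, any convex combination of two points of $\mathcal A\cap C$ stays in $C$ by convexity of $C$ and keeps $\rho_{\mathcal A}\le0$ by convexity of $\rho_{\mathcal A}$, so $\mathcal A\cap C$ is convex; the concave case gives convexity of $\mathcal A^\complement\cap C$ verbatim with the strict inequality; and additivity, which along the cone structure of $C$ entails both the convex and the concave inequalities, delivers both sets at once. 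The only point demanding care is again the passage from pointwise inequalities to membership of the relevant combinations inside $C$, which convexity of $C$ is exactly what secures.
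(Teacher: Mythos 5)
Your proofs of \Cref{main.lemma.s.1} and \Cref{main.lemma.s.2}, and of the converse implications for those two items, coincide with the paper's own argument: the paper likewise picks scalars $x,y$ with $X+x\in\A$, $Y+y\in\A$ (respectively in $\A^\complement$, via \Cref{risk.corisk}), uses stability of $C$ under scalar addition together with convexity of the intersection, and then takes the infimum (respectively supremum); the converses are read off from $\A=\A_{\rho_\A}=\{\rho_\A\le 0\}$ exactly as you do.

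The obstacle you flag in \Cref{main.lemma.s.3} is genuine, and it is a gap in the paper's proof rather than in your reasoning. The paper disposes of this item with the remark that ``under normalization, a map is linear if and only if it is both convex and concave''; unwinding that remark gives precisely your two-segment argument, which needs both $0\in C$ and $X+Y\in C$, and neither follows from $C+\R=C$. In fact the implication is false under the stated hypotheses. Take $\Omega=\{0,1\}$, $\X\equiv\R^2$ (constants being the diagonal), let $f(t)=-t/2$ for $t\le 2$ and $f(t)=1-t$ for $t\ge 2$, and set $\rho(x,y)\coloneqq -y+f(x-y)$. Since $f$ is non-increasing, $1$-Lipschitz, and $f(0)=0$, this $\rho$ is a continuous monetary risk measure, so $\A\coloneqq\{\rho\le0\}$ is a closed monetary acceptance set with $\rho_\A=\rho$. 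Let $C\coloneqq\{(x,y)\in\R^2\colon x-y\in[1,2]\}$, which is convex and stable under scalar addition. In the linear coordinates $(t,y)=(x-y,y)$ one has $\A\cap C=\{t\in[1,2],\; y\ge f(t)\}$ and $\A^\complement\cap C=\{t\in[1,2],\; y< f(t)\}$, both convex because $f$ is affine on $[1,2]$; yet for $X=Y=(2,0)\in C$ one gets $\rho_\A(X)+\rho_\A(Y)=-2$ while $\rho_\A(X+Y)=f(4)=-3$, so $\rho_\A$ is not additive in $C$ (it is, consistently with \Cref{main.lemma.s.1} and \Cref{main.lemma.s.2}, affine on $C$). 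So \Cref{main.lemma.s.3} needs exactly the strengthening you propose: $C$ should contain $0$ and be closed under addition (e.g.\ a convex cone plus $\R$). This is harmless for the rest of the paper, since in the proof of \Cref{main.theorem} the theorem is only invoked for $C=\conv(\cone(\{X\}\cup\{Y\}))+\R$, which has both properties.

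One smaller point: for the converse of \Cref{main.lemma.s.3} you appeal to ``the cone structure of $C$,'' but $C$ is not assumed to be a cone; passing from additivity in $C$ to the convex and concave inequalities again uses $\tfrac{X+Y}{2}\in C$ (convexity of $C$), additivity applied to the pair $\bigl(\tfrac{X+Y}{2},\tfrac{X+Y}{2}\bigr)$, and a dyadic/limiting argument to reach non-dyadic $\lambda$. The paper is equally terse here (``follows by the previous items''), so this is a shared looseness rather than a defect of your proposal alone.
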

\begin{proof}

For \Cref{main.lemma.s.1}, let $X, Y \in C$ and note that there is $x , y \in  \R$ such that $X+x \in  \A$ and $Y + y \in \A$. As $C$ is stable under scalar addition, it also holds that $X + x \in  C $ for any $x \in \R$, and similarly for $Y+y$. Consequently, the convexity of $\A\cap C$ implies that $\lambda (X+x) + (1-\lambda)(Y+y) \in \A$ for any $\lambda \in [0,1]$. Therefore, $\rho_\A ( \lambda(X+x) + (1-\lambda)(Y+y)) \leq 0$, and the cash invariance of $\rho_\A$ implies $\rho_\A ( \lambda X+ (1-\lambda)Y) \leq \lambda x + (1-\lambda) y$. Then, taking the infimum over $x$ and $y$ yields
\[\rho_\A ( \lambda X+ (1-\lambda)Y) \leq \lambda \rho_{\A}(X) + (1-\lambda) \rho_{\A} (Y). \]

Regarding \Cref{main.lemma.s.2},  take $X, Y \in C$ and notice that, there is $x , y \in \R$ such that $X+x \in \A^\complement$ and $Y + y \in \A^\complement$. Therefore, the convexity of $\A^\complement\cap  C$ implies that $\lambda (X+x) + (1-\lambda)(Y+y) \in \A^\complement$ for any $\lambda \in [0,1]$. Hence we have $\rho_\A ( \lambda(X+x) + (1-\lambda)(Y+y)) > 0$, so the cash invariance of $\rho_\A$ implies $\rho_\A ( \lambda X+  (1-\lambda)Y) > \lambda x + (1-\lambda) y$. Then, taking a supremum over $x$ and $y$ and using \Cref{risk.corisk} yields 
\[ \rho_\A ( \lambda X+ (1-\lambda)Y) \geq \lambda \phi_{\A^\complement}(X) + (1-\lambda) \phi_{\A^\complement} (Y)=\lambda \rho_{\A}(X) + (1-\lambda) \rho_{\A} (Y).\]

For \Cref{main.lemma.s.3}, recall that under normalization, a map is linear if and only if it is both convex and concave. Hence, the claim is a direct consequence of the previous items.

When $\mathcal{A}$ is closed, the converse of \Cref{main.lemma.s.1} is straightforward by \Cref{lemma.basic}, \Cref{lemma.basic.item.2}. For \Cref{main.lemma.s.2}, take $X, Y \in \A^\complement \cap C$. Since $\mathcal{A}$ is closed, it holds that $\mathcal{A}=\mathcal{A}_{\rho_{\mathcal{A}}}$ (\Cref{lemma.basic.item.2} of \Cref{lemma.basic}), which implies $\rho_\A (X) >0 $ and $\rho_\A (Y)>0 $. Concavity of $\rho_\A$ in $C$ implies $\rho_\A ( \lambda X+ (1-\lambda)Y) \geq \lambda \rho_{\A}(X) + (1-\lambda) \rho_{\A} (Y) > 0 $, whence we conclude that $\lambda X+ (1-\lambda)Y \in  \A^\complement_{\mathcal{A}_{\rho}} = \A^{\complement}$. Additionally, it also belongs to $C$ as it is a convex set. Finally,  \Cref{main.lemma.s.3} follows by the previous items. 
\end{proof}

\begin{remark}
  Examples of sets $C \subseteq \mathcal{X}$ that fulfill the hypothesis of the above theorem are, for a given, fixed, $X \in \mathcal{X}$: the class of random variables independent of $X$, namely $C^{ind}_X = \{Y \in \mathcal{X}:X\text{ and }Y \text{ are independent}\}$, the set of random variables uncorrelated with $X$, that is $C^{uncor}_X \{Y \in \mathcal{X}:\operatorname{Cov}(X,Y)=0\}$, and the set of affine transformations of $X$, namely $C^{cov}_X \{Y \in \mathcal{X}:\operatorname{Cov}(X,Y)=1\}$. As an application of \Cref{main.lemma}, notice that, if $C^{ind}_X \cap \mathcal{A}$ and $C^{ind}_X \cap \mathcal{A}^{\complement}$ are convex for all $X \in \mathcal{X}$, then $\rho_\mathcal{A}$ is additive for independent random variables. This closely relates to the literature on additive risk measures and premium principles (see, for instance, \cite{goovaerts2004comonotonic}, and \cite{goovaerts2010note}).
\end{remark}

The preceding reasoning and results yield comonotonic additivity of $\rho$ whenever $\mathcal{A}$ and $\mathcal{A}^\complement$ are both convex for comonotonic pairs; this is the content of the main theorem in this section. We now show a result that relates comonotonic variables to the needed assumptions. To this end, we will denote by $C_X \coloneqq \{Y \in \X\colon\, \text{$Y$ is comonotonic to $X$} \} $ the set of all random variables that are comonotonic to $X \in \mathcal{X}$.

\begin{lemma}\label{lemma convex cone comono}
	Let $X\in\X$. The following holds:
    \begin{enumerate}
        \item \label{lemma convex cone comono.item1}$C_X$ is a convex cone that is closed with respect to the topology of convergence in probability.
        \item \label{lemma convex cone comono.item2}If $X,Y$ is a comonotonic pair, then any two elements of the convex cone \(C_{X,Y} \coloneqq \conv(\cone (\{X\} \cup \{Y\} ))\) are comonotonic to one other.
        \item \label{lemma convex cone comono.item3} Additionally, if neither $X$ or $Y$ are constants, then $C_{X,Y}\cap \R = \{0\}$.
    \end{enumerate}
\end{lemma}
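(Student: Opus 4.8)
The plan is to treat the three claims in sequence, reducing everything to the defining inequality \eqref{def.comon}; writing $\Delta X \defcol X(\omega) - X(\omega')$ and $\Delta Y \defcol Y(\omega) - Y(\omega')$, comonotonicity of a pair means precisely that $\Delta X\,\Delta Y \ge 0$ holds $\Prob\otimes\Prob\mhyphen$a.s.

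For \Cref{lemma convex cone comono.item1}, I would first establish the convex-cone structure by a one-line computation. For $Y_1,Y_2\in C_X$ and scalars $a,b\ge 0$, the increment of $aY_1+bY_2$ is $a\,\Delta Y_1 + b\,\Delta Y_2$, so multiplying by $\Delta X$ gives $a\,\Delta X\,\Delta Y_1 + b\,\Delta X\,\Delta Y_2 \ge 0$ a.s.; this yields closure under addition and nonnegative scaling, and $0\in C_X$ is immediate, so $C_X$ is a convex cone. The substantive step is closedness in probability. Given $Y_n\in C_X$ with $Y_n\to Y$ in probability, I would pass to an a.s.-convergent subsequence $Y_{n_k}\to Y$, and then invoke Fubini to upgrade a.s. convergence on $(\Omega,\Prob)$ to $\Prob\otimes\Prob\mhyphen$a.s. convergence of the increments $\Delta Y_{n_k}\to\Delta Y$ on the product space (the exceptional set being contained in $(N\times\Omega)\cup(\Omega\times N)$ for the null set $N$ where convergence fails). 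Consequently $\Delta X\,\Delta Y_{n_k}\to \Delta X\,\Delta Y$ a.s., and since each term is nonnegative the limit is nonnegative, whence $Y\in C_X$. This transfer from $\Omega$ to the product space is the only genuine obstacle; the remaining claims are purely algebraic.

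For \Cref{lemma convex cone comono.item2}, I would note that $C_{X,Y}$ is exactly the set of nonnegative combinations $\{sX+tY : s,t\ge 0\}$. Taking $U=s_1X+t_1Y$ and $V=s_2X+t_2Y$ with $s_i,t_i\ge 0$, the product of their increments expands as
\[
(s_1\Delta X + t_1\Delta Y)(s_2\Delta X + t_2\Delta Y) = s_1 s_2 (\Delta X)^2 + (s_1 t_2 + s_2 t_1)\,\Delta X\,\Delta Y + t_1 t_2 (\Delta Y)^2 .
\]
Every coefficient is nonnegative, $(\Delta X)^2,(\Delta Y)^2\ge 0$, and $\Delta X\,\Delta Y\ge 0$ a.s. because $X,Y$ is a comonotonic pair; hence the expression is $\ge 0$ a.s. and $U,V$ are comonotonic. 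Alternatively, this follows from \Cref{lemma convex cone comono.item1} once one checks that $X,Y\in C_U$.

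For \Cref{lemma convex cone comono.item3}, I would show that the only constant in $\{sX+tY:s,t\ge 0\}$ is $0$. Since $0\in C_{X,Y}$, it suffices to prove that $sX+tY=c\in\R$ forces $c=0$. If $t=0,\,s>0$, then $X=c/s$ is constant, contradicting the hypothesis, and symmetrically for $s=0,\,t>0$. If $s,t>0$, then $Y = c/t - (s/t)X$ is a strictly decreasing affine function of $X$, so $\Delta X\,\Delta Y = -(s/t)(\Delta X)^2 \le 0$ a.s.; together with $\Delta X\,\Delta Y\ge 0$ a.s. this forces $(\Delta X)^2=0$ a.s., i.e.\ $X$ constant, again a contradiction. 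Thus necessarily $s=t=0$ and $c=0$, giving $C_{X,Y}\cap\R=\{0\}$. The only place where real care is needed remains the measure-theoretic limit argument in \Cref{lemma convex cone comono.item1}.
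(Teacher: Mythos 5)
Your proposal is correct and takes essentially the same route as the paper: reduce every claim to the defining a.s.\ inequality, verify the cone/convexity and item (2) by expanding products of increments into sums of non-negative terms, and obtain closedness via an a.s.-convergent subsequence (your explicit Fubini/null-set transfer to the product space is in fact more detailed than the paper's one-line conclusion). The only divergence is in item (3), where the paper directly exhibits non-constancy of $\alpha X+\beta Y$ from the order structure ($X(\omega)<X(\omega')$ forces $Y(\omega)\le Y(\omega')$), while you argue by contradiction through the affine relation $Y=c/t-(s/t)X$; both arguments are sound and equally elementary.
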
 
\begin{proof}
	In what follows, all equalities and inequalities are in the $\Pro\otimes\Pro$-almost sure sense, that is, they hold for any pair $(\omega,\omega')$ lying in an event $\Omega_1\subseteq \Omega\times\Omega$ having total $\Pro\otimes\Pro$ measure. $\Omega_1$ can be taken as the countable intersection of the events where the required inequalities (for any pairing of $X$, $Y$, $Y_n$, $Z$ and $W$) hold.
	
	We start proving \Cref{lemma convex cone comono.item1}. To see that $C_X$ is a cone, note that for any $Y \in C_X$ we have, by definition,
	\(
	\big(X(\omega) - X(\omega')\big) \left(Y(\omega)-Y(\omega') \right) \geq 0,
	\)
	for any $(\omega,\omega')\in \Omega_1.$ Hence, for any $\lambda \geq 0$ and {$(\omega,\omega')\in \Omega_1$},
	\[
	\big(X(\omega) - X(\omega')\big) \big(\lambda Y(\omega)- \lambda Y(\omega') \big)= \lambda \big(X(\omega) - X(\omega')\big) \big(Y (\omega)- Y (\omega') \big) \geq 0,
	\]
	yielding $\lambda Y \in C_X$. For convexity, let $Y,Z \in C_X$. Then, for $\lambda \in [0,1]$ we have that,
	\begin{align*}
	& \Big[ X(\omega) - X(\omega') \Big] \Big[ \big(\lambda Y(\omega) + (1-\lambda) Z(\omega)\big) - \big(\lambda Y(\omega') + (1-\lambda) Z(\omega')\big) \Big]
	\\ = &\lambda \left[ X(\omega) - X(\omega')\right] \left[ Y(\omega)-Y(\omega') \right] + (1-\lambda) \left[ X(\omega) - X(\omega')\right] \left[ Z(\omega)-Z(\omega') \right] \geq 0
	\end{align*}
	whenever $(\omega,\omega')\in \Omega_1$. {To see that $C_X$ is closed in the asserted sense, consider a convergent sequence $\{Y_n\}\subseteq C_X$ with $Y_n \to Y$ in probability. By standard facts of measure theory, there is a subsequence $\{Y_{n(k)}\}$ such that $Y_{n(k)}\to Y$ almost surely. Clearly, this yields that $Y$ is comonotonic to $X$.}
	
	For \Cref{lemma convex cone comono.item2}, let $Z,W \in C_{X,Y} $. By definition we have
	\(Z = \gamma_1 (\lambda_1 X) + (1-\gamma_1)(\delta_1 Y)\)
	for some triplet $(\gamma_1, \lambda_1, \delta_1)$ with $0\leq \gamma_1\leq1$ and $0\leq \lambda_1,\delta_1$, and similarly
	\(W = \gamma_2 (\lambda_2 X) + (1-\gamma_2)(\delta_2 Y)\)
	for some triplet $(\gamma_2, \lambda_2, \delta_2)$ with $0\leq \gamma_2\leq1$ and $0\leq \lambda_2,\delta_2$. Then, for $(\omega,\omega')\in \Omega_1$, expanding the product
	\[
	\big(Z(\omega) - Z(\omega')\big)\big(W(\omega) - W(\omega')\big)
	\]
	yields a weighted sum whose terms are all non-negative. 

For the last item, it is enough to verify that the additive combination of non-constants comonotonic random variables can not be constant. As $X$ is non-constant, then there is $\omega,\omega' \in \Omega$ such that $X(\omega) <X(\omega')$ and comonotonicity implies $Y(\omega) \leq Y(\omega')$. Therefore, for any $\alpha,\beta > 0$ it holds that $(\alpha X + \beta Y)(\omega) =  \alpha X (\omega)  + \beta Y (\omega) < \alpha X (\omega')  + \beta Y (\omega) \leq (\alpha X + \beta Y)(\omega')$. Hence, $\alpha X + \beta Y$ is not constant. 
\end{proof}

\begin{remark}
	Note that the set \(C \coloneqq \bigcap_{Y \in C_X} C_Y, \)
	 is a non-empty, closed, and convex set, such that all its elements are comonotonic to one another. In particular, $\R \subseteq C$ and $C + \R = C$.
\end{remark}

We are now in a position to prove the main result of this section.

% \begin{theorem}\label{main.theorem}
% Let $\mathcal{A}$ be a closed \mm{Mesma coisa que no theorema de antes, abaixo uma versão sem essa suposição.} monetary acceptance set and $\rho$ a risk measure. Then we have the following:
% \begin{enumerate}
%     \item \label{main.1.a} $\rho_{\mathcal{A}}$ is comonotonic additive if and only of $\mathcal{A} $ and $\mathcal{A}^{\complement}$ are comonotonic convex.
%     %\item If $\rho$ is comonotonic additive, then $\mathcal{A}_{\rho}$ and $\mathcal{A}_{\rho}^{\complement}$ are comonotonic convex.
%     % \item \label{main.2} Assume that $\mathcal{A}$ is closed. Then $\rho_{\mathcal{A}}$ is comonotonic additive if and only if $\mathcal{A}$ and $\mathcal{A}^{\complement}$ are comonotonic convex.
%     \item \label{main.3} $\rho$ is comonotonic additive if and only if $\mathcal{A}_{\rho} $ and $\mathcal{A}_{\rho}^{\complement}$ are comonotonic convex.
   
% \end{enumerate}
% \end{theorem}
% \begin{proof}
% \Cref{main.1.a} follows directly from \Cref{main.lemma}, while noting that $C_X$ is convex and stabe under scalar addition.  \Cref{main.3} follows from \Cref{main.lemma,lemma.basic} by taking $\rho = \rho_{\A_{\rho}}$.

% \end{proof}

\begin{theorem}\label{main.theorem}
Let $\mathcal{A}$ be a monetary acceptance set and $\rho$ a risk measure. Then we have the following:
\begin{enumerate}
    \item \label{main.1.a} If  $\mathcal{A} $ and $\mathcal{A}^{\complement}$ are comonotonic convex, then $\rho_{\mathcal{A}}$ is comonotonic additive. The converse implication holds if $\mathcal{A}$ is closed.
    %\item If $\rho$ is comonotonic additive, then $\mathcal{A}_{\rho}$ and $\mathcal{A}_{\rho}^{\complement}$ are comonotonic convex.
    % \item \label{main.2} Assume that $\mathcal{A}$ is closed. Then $\rho_{\mathcal{A}}$ is comonotonic additive if and only if $\mathcal{A}$ and $\mathcal{A}^{\complement}$ are comonotonic convex.
    \item \label{main.3} The risk measure $\rho$ is comonotonic additive if and only if $\mathcal{A}_{\rho} $ and $\mathcal{A}_{\rho}^{\complement}$ are comonotonic convex.
   \end{enumerate}
\end{theorem}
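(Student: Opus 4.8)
The plan is to read both statements off \Cref{main.lemma} after supplying it with a well-chosen test set. Given a comonotonic pair $X,Y$, I would work with $C \coloneqq C_{X,Y} + \R$, where $C_{X,Y} = \conv(\cone(\{X\}\cup\{Y\}))$ is the convex cone of \Cref{lemma convex cone comono}. This $C$ is convex, contains $X$, $Y$ and $0$, and satisfies $C + \R = C$, so it is stable under scalar addition; moreover, since adding a constant leaves the difference $X(\omega)-X(\omega')$ unchanged, \Cref{lemma convex cone comono.item2} shows that any two elements of $C$ are comonotonic. Consequently, intersecting a comonotonic convex set with $C$ yields an honestly convex set, and a comonotonic additive functional is additive on all of $C$.

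For the forward implication in \Cref{main.1.a}, I would fix a comonotonic pair $X,Y$ and note that comonotonic convexity of $\mathcal{A}$ and $\mathcal{A}^{\complement}$ makes $\mathcal{A}\cap C$ and $\mathcal{A}^{\complement}\cap C$ convex; \Cref{main.lemma.s.3} then gives additivity of $\rho_{\mathcal{A}}$ on $C$, and since $X,Y\in C$ this is exactly $\rho_{\mathcal{A}}(X+Y)=\rho_{\mathcal{A}}(X)+\rho_{\mathcal{A}}(Y)$. This direction costs nothing beyond \Cref{main.lemma}: the convexity and concavity of $\rho_{\mathcal{A}}$ on $C$ are manufactured for free from the infimum and supremum representations (the latter through \Cref{risk.corisk}), and affinity (hence the scalar homogeneity needed to conclude additivity) emerges as their consequence. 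The \emph{if} part of \Cref{main.3} is then immediate by applying this to $\mathcal{A}=\mathcal{A}_{\rho}$, which is a monetary acceptance set, and using $\rho=\rho_{\mathcal{A}_{\rho}}$ from \Cref{lemma.basic.item.1}.

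The two converse implications are the crux, since they require passing from additivity of a functional to convexity of sets, and additivity alone does not give convexity. The key step I would isolate is that a comonotonic additive risk measure is positively homogeneous: $\rho(\lambda X)=\lambda\rho(X)$ for every $\lambda\ge 0$. To prove it, fix $X$ and observe that $\lambda\mapsto\rho(\lambda X)$ satisfies the Cauchy equation on $[0,\infty)$, because $\lambda X$ and $\mu X$ are comonotonic and hence $\rho((\lambda+\mu)X)=\rho(\lambda X)+\rho(\mu X)$. Monotonicity of $\rho$ then eliminates the pathological additive solutions: directly when $X$ has constant sign (the map is monotone in $\lambda$), and in general by splitting $X=X^{+}-X^{-}$ into a comonotonic sum of a nonnegative and a nonpositive part, or, for $X\in L^{\infty}$, by sandwiching $\lambda X$ between the constants $\pm\lambda\Vert X\Vert_{\infty}$. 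I expect this homogeneity step to be the main obstacle, and the place where the lattice structure of $\mathcal{X}$ (so that $X^{\pm}\in\mathcal{X}$) is implicitly used. Positive homogeneity upgrades comonotonic additivity to comonotonic affinity, $\rho(\lambda X+(1-\lambda)Y)=\lambda\rho(X)+(1-\lambda)\rho(Y)$ for comonotonic $X,Y$ and $\lambda\in[0,1]$.

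With affinity available, the converses close quickly. For the \emph{only if} part of \Cref{main.3}, a comonotonic pair $Z,W\in\mathcal{A}_{\rho}$ has $\rho(Z),\rho(W)\le 0$, whence $\rho(\lambda Z+(1-\lambda)W)=\lambda\rho(Z)+(1-\lambda)\rho(W)\le 0$, so $\mathcal{A}_{\rho}$ is comonotonic convex; the same computation with strict inequalities handles $\mathcal{A}_{\rho}^{\complement}$. For the converse in \Cref{main.1.a}, closedness of $\mathcal{A}$ gives $\mathcal{A}=\mathcal{A}_{\rho_{\mathcal{A}}}$ by \Cref{lemma.basic.item.2}, so that $\mathcal{A}=\{\rho_{\mathcal{A}}\le 0\}$ and $\mathcal{A}^{\complement}=\{\rho_{\mathcal{A}}>0\}$, and the identical affinity argument applied to $\rho_{\mathcal{A}}$ shows both $\mathcal{A}$ and $\mathcal{A}^{\complement}$ are comonotonic convex. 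Apart from the positive-homogeneity step, everything reduces to bookkeeping with $C$ and the acceptance-set/risk-measure dictionary of \Cref{lemma.basic}.
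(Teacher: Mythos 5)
Your proposal is correct, and for the forward implications it coincides with the paper's proof: both arguments pass through the comonotonic convex cone $C_{X,Y}+\R$ supplied by \Cref{lemma convex cone comono}, feed it into \Cref{main.lemma.s.3} of \Cref{main.lemma}, and then obtain the ``if'' part of \Cref{main.3} by applying \Cref{main.1.a} to $\mathcal{A}_\rho$ together with $\rho=\rho_{\mathcal{A}_\rho}$. The genuine difference is in the converses. The paper dispatches them in one line each: the converse of \Cref{main.1.a} is referred to the (unproved-in-detail) converse of \Cref{main.lemma.s.3}, and for the ``only if'' part of \Cref{main.3} it simply writes $\rho(\lambda X+(1-\lambda)Y)=\lambda\rho(X)+(1-\lambda)\rho(Y)>0$ for a comonotonic pair in $\mathcal{A}_\rho^\complement$. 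Both steps silently use that comonotonic additivity implies positive homogeneity (equivalently, that additivity on a cone upgrades to affinity), which is precisely the step you isolate as the crux and prove via the Cauchy functional equation regularized by monotonicity. So your route is the paper's route with its hidden lemma made explicit, and in that respect it is more complete than the printed proof: you correctly observe that no such issue arises in the forward direction, where homogeneity falls out of convexity plus concavity for free. One caveat cuts both ways: your general-$X$ argument needs $X^{\pm}\in\mathcal{X}$ (a lattice assumption the paper never imposes, since it only requires $L^\infty\subseteq\mathcal{X}\subseteq L^0$), while your sandwich argument needs $X\in L^\infty$; on a general $\mathcal{X}$ containing unbounded non-lattice elements some such hypothesis appears genuinely necessary. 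This is not a defect of your proof relative to the paper's --- the paper needs the same fact and supplies no argument at all --- but it is worth flagging that the converse halves of the theorem, as stated in full generality, rest on this regularization step in either treatment.
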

\begin{proof}
%%%%O paragrafo imediatamente abaixo foi substituido pelo primeiro paragrafo que aparece na prova
%%%%For \Cref{main.1.a}, let $X$ and $Y$ be a comonotonic pair. Thus, $C_{X,Y} = \conv (\cone ( \{X\} \cup \{Y\})) + \R  $ is stable under scalar addition and convex. Further, since by \Cref{lemma convex cone comono} all elements in $ \conv (\cone ( \{X\} \cup \{Y\})) $  are comonotonic to each other, also are the elements in $C_{X,Y}$. Then, comonotonic convexity of $\mathcal{A} $ and $\mathcal{A}^{\complement}$ imply that  $\mathcal{A}\cap C_{X,Y} $ and $\mathcal{A}^{\complement}\cap C_{X,Y}$ are convex sets. Hence, the result follows directly from \Cref{main.lemma}.

For the first part of \Cref{main.1.a}, let $X$ and $Y$ be a comonotonic pair. By \Cref{lemma convex cone comono}, all elements in $ \conv (\cone ( \{X\} \cup \{Y\}))$ are comonotonic to each other. This implies, in light of the comonotonic convexity of $\mathcal{A}$ and $\mathcal{A}^{\complement}$, that $\mathcal{A}\cap C_{X,Y} $ and $\mathcal{A}^{\complement}\cap C_{X,Y}$ are convex sets. Since $\conv (\cone ( \{X\} \cup \{Y\})) + \R  $ is stable under scalar addition, the result follows from \Cref{main.lemma}. The converse of \Cref{main.1.a} follows directly from the converse of \Cref{main.lemma.s.3} of \Cref{main.lemma}

Regarding the ``only if" part of \Cref{main.3}, we will  show that $\mathcal{A}_{\rho}^{\complement}$ is comonotonic convex. A similar argument also holds for  $\mathcal{A}_{\rho} $. Take a comonotonic pair $X,Y \in \A_\rho^\complement$. We need to show that $\lambda X + (1-\lambda)Y \in \A_{\rho}^\complement $ for any $\lambda \in [0,1]$. But $ \rho (\lambda X + (1-\lambda) Y ) = \lambda \rho (X) + (1-\lambda) \rho(Y) > 0 $, which concludes the proof. The converse direction follows directly from \Cref{main.1.a} and the fact that $\rho=\rho_{\mathcal{A}_\rho}$.
\end{proof}

\section{Deviation measures}\label{deviations}

For deviations, a similar line of reasoning applies as for monetary risk measures but with distinct technical machinery. To explore this further, we introduce additional properties that comprise the basic setup to study Minkowski deviation measures. 

\begin{definition}\label{def.accept2}
An acceptance set $\mathcal{A}$ is a \defin{Minkowski acceptance set} if it satisfies the following:  \begin{enumerate}[resume = axioms.A,label=\textbf{\Alph*.}]
  
\item ({Star-shapedness}) $ \mathcal{A} $ is \defin{star-shaped} if $\lambda X\in \mathcal{A}$, for every $X \in \mathcal{A}$ and $\lambda \in [0,1]$. 

% The set $ \mathcal{A} $ is said to be \defin{radially bounded at non-constants} if $ \mathcal{A} \backslash \R$ is radially bounded.

\item({Stability under scalar addition}) $ \mathcal{A} $ is \defin{stable under scalar addition} if $\mathcal{A} + \R = \mathcal{A}$, that is, if $ X + c \in \mathcal{A}$, for all $X \in \mathcal{A}$ and $c \in \R$.

\item({Radial boundedness at non-constants}) $\mathcal{A}$ is \defin{radially bounded at non-constants} if, for every $X \in \mathcal{A} \backslash \R$, there is some $ \delta_X \in (0 , \infty)$, such that $ \delta X \notin \mathcal{A}$ whenever $\delta \in [\delta_X , \infty)$. 

 % \item({\sc Absorbency}) $ \mathcal{A} $ is \defin{absorbing} if, for every $X \in \X$, there is some $\delta_X > 0$ such that $[0, \delta_X]X\subseteq \mathcal{A} $, that is, such that $\lambda X\in \mathcal{A}$ whenever $0\leq \lambda\leq \delta_X$.

% $ \mathcal{A} $ is said to be \defin{costar-shaped} if $ \mathcal{A}^\complement $ is star-shaped.
 \end{enumerate}
\end{definition}

% \begin{remark}
% We let $\sta(\mathcal{A})$ denote the \textbf{star-shaped hull} of $\mathcal{A}$, which is defined by the condition that $Z\in\sta(A)$ if and only if $Z = \lambda X$ for some $\lambda\in[0,1]$ and some $X\in A$ (that is, $\sta(A) = [0,1]A$ in our preceding notation). It is clear that $\sta(A)$ is the smallest star-shaped set that contains $A$. Also, as an arbitrary intersection of star-shaped sets is still star-shaped, we see that $\sta(A)$ is equal to the intersection of all star-shaped sets that contain $A$. 
% \end{remark}

\begin{definition}\label{def.dev}
For a functional $\mathcal{D}\colon\X\to [0,+\infty]$ we define its sub-level sets of the form $\Acc{}{\mathcal{D}}\coloneqq \{X\in\X\colon\,\mathcal{D}(X)\le 1\}$. Further, $\mathcal{D}$ is a \textbf{deviation measure} if it fulfils: \begin{enumerate}
  
\item ({Non-negativity}) $\mathcal{D}$ is \defin{non-negative} if $\mathcal{D} (X) > 0$ for any non-constant $X\in \mathcal{X}$ and $\mathcal{D}(X) = 0 $ for any constant $X \in \mathcal{X}$.

\item ({Translation insensitivity}) $\mathcal{D}$ is \defin{translation insensitive} if $\mathcal{D} (X + c) = \mathcal{D} (X)$ for any $X\in\X$ and $c \in \R$.

\item (Positive homogeneity) $\mathcal{D}$ is \textbf{positive homogeneous} if $\mathcal{D}(\lambda X) = \lambda \mathcal{D}(X)$ for any $X\in\X$ and $ \lambda\geq 0 $.
\end{enumerate}
A deviation measure may also satisfy the properties in \Cref{def.risk.measure}.
% A convex deviation measure is said to be a \defin{generalized deviation measure}. 
\end{definition}

We now define the Minkowski Deviation, introduced in \cite{moresco2023minkowski}, which is the main tool used in this section. A financial interpretation is that such a map indicates how much we should shrink (or ``gauge'') a certain position for it to become acceptable.

\begin{definition}\label{induce deviation} \label{SID}
Let $ \mathcal{A} \subseteq \X .$ The \defin{Minkowski Deviation of $\mathcal{A}$} is the functional $\D_\mathcal{A}\colon\X \to [0,+\infty]$ defined, for $X\in\X,$ by
\begin{align}\label{eq:fA}
\D_\mathcal{A}(X) \coloneqq \inf \left\{m > 0\colon\, m^{-1}{X} \in \mathcal{A} \right\},
\end{align}
where $\inf \varnothing = +\infty .$ 
\end{definition}

In analogy to \Cref{lemma.basic}, the next lemma relates acceptance sets to Minkowski deviations.

\begin{lemma}[Theorem 3.2 and 3.5 of \cite{moresco2023minkowski}]\label{radially} \label{generalized dev}\label{lemma-item4}
    Let $\mathcal{D}$ be a deviation measure and let $\A$ be a Minkowski acceptance set. Then we have the following: 
    \begin{enumerate}
    
        \item $\mathcal{D}(X) = \D_{\Acc{1}{\D}}(X)$ for all $X \in \X$ 
        \item  $\{ X \in \X : \f (X) <1 \} \subseteq \A \subseteq \mathcal{A}_{\f} \subseteq \cl (\A)$, where $\cl(\A)$ is the closure of $\A$.
        \item  If $\mathcal{A}$ is convex, then $\f$ is convex. Conversely, if $\mathcal{D}$ is convex, then $\mathcal{A}_{\D}$ is  convex.
         \item\label{lemma dev 4}   $\f$ is a deviation measure and  $\mathcal{A}_{\D}$ is  a Minkowski acceptance set.
    \end{enumerate}
\end{lemma}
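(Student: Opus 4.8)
The plan is to treat these four items as the gauge-theoretic analogue of \Cref{lemma.basic}, with positive homogeneity now playing the role that cash invariance played for monetary measures. Before anything else I would record two elementary facts used repeatedly. First, $0\in\A$: since $\A$ is nonempty and star-shaped, picking any $X\in\A$ and taking $\lambda=0$ gives $0=0\cdot X\in\A$. Second, the gauge $\f=\D_\A$ is positively homogeneous, which falls out of the substitution $m=\lambda m'$ in the defining infimum, yielding $\D_\A(\lambda X)=\lambda\D_\A(X)$ for $\lambda>0$ and $\D_\A(0)=0$. With this in hand, the first item is quick: unfolding the gauge of the sublevel set gives $\D_{\mathcal{A}_\D}(X)=\inf\{m>0:\D(m^{-1}X)\le1\}$, and positive homogeneity of $\D$ rewrites the constraint as $m\ge\D(X)$, so the infimum of $\{m>0:m\ge\D(X)\}$ equals $\D(X)$ in every case (whether $\D(X)$ is zero, finite positive, or $+\infty$, the latter giving the empty set).

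For the sandwich in the second item, the middle inclusion $\A\subseteq\mathcal{A}_\f$ is immediate by plugging $m=1$ into the infimum. The left inclusion $\{\f<1\}\subseteq\A$ is where star-shapedness enters: if $\f(X)<1$ there is $m<1$ with $m^{-1}X\in\A$, and scaling this acceptable element by $\lambda=m\in[0,1]$ returns $X$ itself, so $X\in\A$. For the right inclusion $\mathcal{A}_\f\subseteq\cl(\A)$ I would avoid any assumption that $\A$ is sequentially nice by approximating from the inside: for $\f(X)\le1$ take $\lambda_n\uparrow1$ with $\lambda_n<1$, note $\f(\lambda_n X)=\lambda_n\f(X)<1$ so that $\lambda_n X\in\A$ by the left inclusion, and conclude $\lambda_n X\to X$ by continuity of scalar multiplication in $\X$.

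The third item splits into the two classical gauge facts. For the forward direction, given $a>\f(X)$ and $b>\f(Y)$ (the case where either is $+\infty$ being trivial), I would first upgrade these to $a^{-1}X,b^{-1}Y\in\A$ via star-shapedness, then observe that $(\lambda a+(1-\lambda)b)^{-1}(\lambda X+(1-\lambda)Y)$ is precisely the convex combination of $a^{-1}X$ and $b^{-1}Y$ with weights $\lambda a/c$ and $(1-\lambda)b/c$, where $c=\lambda a+(1-\lambda)b$. Convexity of $\A$ then yields $\f(\lambda X+(1-\lambda)Y)\le c$, and taking infima over $a,b$ finishes. The converse is the standard sublevel-set argument: $\D(\lambda X+(1-\lambda)Y)\le\lambda\D(X)+(1-\lambda)\D(Y)\le1$.

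The fourth item is mostly a checklist, and it is where I expect the one genuine obstacle. Translation insensitivity of $\f$ follows from stability under scalar addition (the constant $m^{-1}c$ is absorbed inside $\A$), and positive homogeneity was recorded already; the delicate property is non-negativity at non-constants, which is exactly what radial boundedness is designed to supply. I would argue by contradiction: if $\f(X)=0$ for some non-constant $X$, there is a sequence $m_n\downarrow0$ with $m_n^{-1}X\in\A$; fixing one index $n_0$ and rewriting the remaining elements as dilations $(m_{n_0}/m_n)\,(m_{n_0}^{-1}X)$ of the non-constant acceptable element $m_{n_0}^{-1}X$, the factors $m_{n_0}/m_n\to\infty$ exhibit arbitrarily large dilations inside $\A$, contradicting radial boundedness at non-constants. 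Finally, for $\mathcal{A}_\D$ I would read off nonemptiness ($0\in\mathcal{A}_\D$), star-shapedness, stability under scalar addition, and radial boundedness at non-constants directly from $\D(0)=0$, positive homogeneity, translation insensitivity, and non-negativity of $\D$ respectively; for instance radial boundedness holds because $\D(\delta X)=\delta\D(X)>1$ once $\delta>1/\D(X)$, a threshold that is finite and positive precisely because $X$ is non-constant.
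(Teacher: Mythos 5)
Your proof is correct. Note that the paper itself does not prove this lemma at all --- it imports it by citation as Theorems 3.2 and 3.5 of \cite{moresco2023minkowski} --- so there is no internal proof to compare against; what you have written is a self-contained verification of the cited result, via the standard gauge/Minkowski-functional arguments. All four items check out: the substitution argument for positive homogeneity of $\f$; the three sandwich inclusions, with star-shapedness invoked exactly where it is needed (to pass from $\f(X)<1$ to $X\in\A$) and continuity of scalar multiplication in the topological vector space $\X$ handling $\mathcal{A}_{\f}\subseteq\cl(\A)$ without any regularity assumption on $\A$; the classical two-directional convexity transfer; and the contradiction argument turning radial boundedness at non-constants into strict positivity of $\f$ at non-constants, which is indeed the only delicate point. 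The single step you leave tacit is that $\f$ vanishes on constants, but this is immediate from two facts you already recorded ($\f(0)=0$, since $0\in\A$, together with translation insensitivity of $\f$), so it is a presentational omission rather than a gap.
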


% \begin{theorem} [Theorem 3.2 of \cite{moresco2023minkowski}]
%  If $\mathcal{A}$ is a (convex) Minkowski acceptance set, then $\D_\mathcal{A}$ is a (generalized) deviation measure. 
% \end{theorem}

% \begin{theorem}[Theorem 3.5 of \cite{moresco2023minkowski}]\sss{Nao usamos este teorema e ele nao eh novo. Acho que vale mais a pena deixar a figura e tirar o teorema}\mr{Acho que usamos sem mencionar, com a ideia de que todo desvio PH é um gauge. Talvez se possa explicitar isso mais. Ainda, aqui são dois claims, talvez ver se usamos os dois.}
% Let $\mathcal{A}\subseteq\X.$ It holds that $\{X\in\X\colon\,\D_\mathcal{A}(X) < 1\} \subseteq \sta (\mathcal{A}) \subseteq \Acc{1}{\D_\mathcal{A}}$ and, if $\mathcal{A}$ is closed and star-shaped, then $\mathcal{A} = \Acc{1}{\D_\mathcal{A}}.$ Furthermore, a functional $\mathcal{D}\colon \X \to [0,+\infty]$ is positive homogeneous if and only if $\mathcal{D} = \D_{\mathcal{A}}$ where $\mathcal{A}$ is any subset of $\X$ satisfying $\{X\in\X\colon\,D(X)<1\}\subseteq \mathcal{A}\subseteq \Acc{1}{D}$ with $0 \in \{X\in\X\colon\,D(X)<1\}.$
% \end{theorem}

Now, we turn our focus to the main results of this section. Similarly to what we did in the previous section, we define an auxiliary map, which represents the most we can shrink a position while keeping it non-acceptable.

 \begin{definition}
 The cogauge of $\mathcal{A}^\complement$ is the functional $ \cog_{\mathcal{A}^\complement}\colon\X \rightarrow [0,+\infty]$ defined, for $X\in \X$, by
\begin{align}
\cog_{\mathcal{A}^\complement} (X) \coloneqq \sup \left\{m \in \R_+^*\colon\, m^{-1}{X} \in \mathcal{A}^\complement \right\},
\end{align}
where $\sup \varnothing = 0 $.
 \end{definition}

We have the following relation between gauge and co-gauge.

\begin{lemma}(Corollary C.8. of \cite{moresco22tese})\label{coro cogauge}
Let $\mathcal{A}\subseteq\X$ be star-shaped. Then $\f (X) = \cog_{\mathcal{A}^\complement} (X)$
holds for all $X\in\X$.
\end{lemma}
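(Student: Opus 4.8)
The plan is to mirror the proof of \Cref{risk.corisk}, replacing the additive (cash-invariance) structure used there by the multiplicative (star-shaped) one. Fix $X \in \X$ and consider the two sets of positive scalars
\[
M \coloneqq \{m > 0 : m^{-1}X \in \mathcal{A}\}, \qquad M' \coloneqq \{m > 0 : m^{-1}X \in \mathcal{A}^\complement\},
\]
which clearly partition $(0,\infty)$. By definition $\f(X) = \inf M$ and $\cog_{\mathcal{A}^\complement}(X) = \sup M'$, so the claim reduces to showing $\inf M = \sup M'$.

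First I would show that star-shapedness forces $M$ to be an upper ray. Indeed, suppose $m \in M$, so that $Y \coloneqq m^{-1}X \in \mathcal{A}$, and take any $m' \geq m$. Writing $(m')^{-1}X = (m/m')\,Y$ with $m/m' \in (0,1]$, star-shapedness of $\mathcal{A}$ yields $(m')^{-1}X \in \mathcal{A}$, i.e.\ $m' \in M$. Hence $M$ is upward closed, and consequently its complement $M'$ in $(0,\infty)$ is downward closed. Thus $M$ and $M'$ are (possibly degenerate) intervals sharing the common endpoint $c \coloneqq \inf M = \sup M'$, which settles the generic case.

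It then remains to reconcile the boundary conventions in the two degenerate cases. If $M = \varnothing$ then $\f(X) = +\infty$, while $M' = (0,\infty)$ gives $\cog_{\mathcal{A}^\complement}(X) = \sup(0,\infty) = +\infty$; symmetrically, if $M' = \varnothing$ then $\cog_{\mathcal{A}^\complement}(X) = \sup\varnothing = 0$, while $M = (0,\infty)$ gives $\f(X) = \inf(0,\infty) = 0$. In both cases the two quantities agree, completing the proof.

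The only genuine subtlety, and hence the step I would treat most carefully, is matching the conventions $\inf\varnothing = +\infty$ and $\sup\varnothing = 0$ at the endpoints of the ray $(0,\infty)$; once the up-set/down-set dichotomy is established, the equality of infimum and supremum is immediate. Everything else is a direct translation of the partition argument used for \Cref{risk.corisk}.
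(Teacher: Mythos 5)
Your proof is correct. Note that the paper itself gives no argument for this lemma: it is imported wholesale as Corollary C.8 of the cited thesis, so there is no in-paper proof to compare against. What you have produced is a valid, self-contained substitute, and it is exactly the multiplicative analogue of the paper's proof of \Cref{risk.corisk}: there, monotonicity of the monetary acceptance set makes $\{m : X+m \in \mathcal{A}\}$ an upper ray of $\R$; here, star-shapedness makes $M=\{m>0 : m^{-1}X \in \mathcal{A}\}$ an upper ray of $(0,\infty)$, and in both cases the infimum over the set equals the supremum over its complement. Your handling of the two degenerate cases is the part that genuinely needs care, and you got it right: the conventions $\inf\varnothing=+\infty$ and $\sup\varnothing=0$ in the paper's definitions of $\f$ and $\cog_{\mathcal{A}^\complement}$ are precisely what make the endpoints match when $M$ or $M'$ is empty (the latter corresponding to $\inf(0,\infty)=0$, which is why the cogauge's convention is $0$ rather than $-\infty$). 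One point worth making explicit, though it follows from what you wrote: if $\inf M=0$ with $M$ upward closed, then $M=(0,\infty)$ and $M'=\varnothing$, so the ``generic'' case with both sets nonempty automatically has a strictly positive common endpoint; this closes any residual gap between your generic and degenerate cases.
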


 We now prove a result regarding (sub/super) additivity of deviation measures, which will be very useful for the main result.

\begin{theorem}
    \label{additive}
    % \mr{Acho que deve ser Theorema, pra manter o padrao com risco monetário. Ainda, o item 4 tem que ser quebrado em sub-, super- e additive, dizendo quem é o $\mathcal{D}$, que não está definido.}
Let $\mathcal{A}\subseteq\X$ be a Minkowski acceptance set and $\mathcal{D}$ a deviation measure. Then we have that:
\begin{enumerate}
\item \label{item1.additive}If $\mathcal{A}$ is convex, then $\f$ is sub-linear (convex and positive homogeneous).
 \item \label{item2.additive}If $\mathcal{A}^\complement$ is convex, then $\f$ is super-linear (concave and positive homogeneous) on $\cone (\mathcal{A}^\complement)$, that is, $\f(X + Y ) \geq \f(X) + \f(Y)$ for any $X,Y \in \cone (\mathcal{A}^\complement)$.
 \item \label{item3.additive} If $C \subseteq \cone (\mathcal{A}^\complement)$ is a cone for which both $\mathcal{A} \cap C $ and $\mathcal{A}^\complement \cap C $ are convex sets, then $\f $ respects $\f(X + Y) = \f(X) + \f(Y) $ for every $ X,Y \in C$.
 \item \label{item4.additive}If $\mathcal{D} $ is additive in some convex cone $C$, then $ \Acc{k}{\D} \cap C$ and $(\Acc{k}{\D})^\complement \cap C$ are convex sets.  
 \end{enumerate}
\end{theorem}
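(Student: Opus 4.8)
The plan is to treat the four items in order, using the gauge--cogauge identity $\f=\cog_{\A^\complement}$ of \Cref{coro cogauge} (valid because a Minkowski acceptance set is star-shaped) as the main engine for the additivity statements, and to reduce everything to a single rescaling identity for convex combinations. Item \Cref{item1.additive} is then immediate from the earlier lemmas: by \Cref{radially}, $\f$ is itself a deviation measure and hence positively homogeneous, while the convexity transfer in the same lemma shows that convexity of $\A$ forces convexity of $\f$; a positively homogeneous convex functional is subadditive (evaluate it at $\tfrac12(X+Y)$), so $\f$ is sublinear, with no further computation needed.

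For item \Cref{item2.additive} I would first record the effect of star-shapedness on the level structure: since $\A$ is star-shaped, $\A^\complement$ is closed under scaling up (if $W\in\A^\complement$ and $\mu\ge 1$, then $\mu W\in\A^\complement$, for otherwise $\mu^{-1}(\mu W)=W\in\A$), so $\{m>0\colon m^{-1}X\in\A^\complement\}$ is a downward interval with supremum $\cog_{\A^\complement}(X)=\f(X)$. Thus for $X,Y\in\cone(\A^\complement)$ and any $0<a<\f(X)$, $0<b<\f(Y)$ we have $a^{-1}X,\,b^{-1}Y\in\A^\complement$, and the identity
\[
(a+b)^{-1}(X+Y)=\frac{a}{a+b}\,(a^{-1}X)+\frac{b}{a+b}\,(b^{-1}Y)
\]
presents $(a+b)^{-1}(X+Y)$ as a convex combination of two members of $\A^\complement$. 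Convexity of $\A^\complement$ places it in $\A^\complement$, giving $\f(X+Y)\ge a+b$; letting $a\uparrow\f(X)$ and $b\uparrow\f(Y)$ yields superadditivity, which together with positive homogeneity is the claimed superlinearity on $\cone(\A^\complement)$. The role of the hypothesis $X,Y\in\cone(\A^\complement)$ is precisely to guarantee $\f(X)>0$, so that admissible $a,b$ exist.

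Item \Cref{item3.additive} is the crux, and I would prove the two inequalities separately on the cone $C$. Superadditivity repeats the argument above, except that the cone property of $C$ now gives $a^{-1}X,\,b^{-1}Y\in\A^\complement\cap C$, where convexity is assumed, so the convex combination lands back in $\A^\complement\cap C$. For subadditivity I pass to the gauge side: star-shapedness makes $\{m>0\colon m^{-1}X\in\A\}$ an upward interval with infimum $\f(X)$, so for $a>\f(X)$, $b>\f(Y)$ one has $a^{-1}X,\,b^{-1}Y\in\A\cap C$; the same rescaling identity and convexity of $\A\cap C$ give $(a+b)^{-1}(X+Y)\in\A$, whence $\f(X+Y)\le a+b$, and taking infima over such $a,b$ yields $\f(X+Y)\le\f(X)+\f(Y)$. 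The two bounds combine to the asserted equality. The delicate points, and the main obstacle of the whole proof, are the boundary behaviour of these sub/super-level intervals (whether the extremum is attained) and the $[0,+\infty]$-valued nature of $\f$, which must be checked not to disrupt the monotone limits $a\uparrow\f(X)$ and $a\downarrow\f(X)$; it is exactly the downward/upward interval structure coming from star-shapedness that keeps these limits well behaved.

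Finally, item \Cref{item4.additive} is a short converse. Additivity of $\D$ on the convex cone $C$ together with positive homogeneity makes $\D$ affine along segments of $C$: for $X,Y\in C$ and $\lambda\in[0,1]$ write $\lambda X,\,(1-\lambda)Y\in C$ and apply additivity, then homogeneity, to get $\D(\lambda X+(1-\lambda)Y)=\lambda\D(X)+(1-\lambda)\D(Y)$. Hence both $\{X\colon\D(X)\le k\}\cap C$ and $\{X\colon\D(X)>k\}\cap C$ are cut out by an affine inequality on the convex set $C$, and are therefore convex.
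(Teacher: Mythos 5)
Your proof is correct, and on the two substantive items it is actually more self-contained than the paper's. Items \Cref{item1.additive} and \Cref{item2.additive} match the paper's route: the paper also reads item \Cref{item1.additive} off \Cref{radially}, and its proof of item \Cref{item2.additive} is the same rescaling computation as yours (it picks $\alpha,\beta$ with $\lambda X\in\alpha\A^\complement$, $(1-\lambda)Y\in\beta\A^\complement$, uses convexity to get $\lambda X+(1-\lambda)Y\in(\alpha+\beta)\A^\complement$, and takes suprema, rather than invoking the downward-interval structure, but the content is identical). The divergence is in items \Cref{item3.additive} and \Cref{item4.additive}. For item \Cref{item3.additive}, the paper restricts $\f$ to $C$ and identifies this restriction with $\D_{\A\cap C}$ on the gauge side (citing sublinearity of gauges of convex sets from \cite{moresco2023minkowski}) and with $\cog_{\A^\complement\cap C}$ on the cogauge side, the latter via the set-algebra identities $\cog_{(\A\cup C^\complement)^\complement}=\D_{\A\cup C^\complement}=\min\big(\f,\D_{C^\complement}\big)$, and then applies item \Cref{item2.additive} to the star-shaped set $\A\cup C^\complement$. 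You instead run both approximation arguments directly on $C$ --- $a>\f(X)$ forces $a^{-1}X\in\A\cap C$, $a<\f(X)$ forces $a^{-1}X\in\A^\complement\cap C$ --- which bypasses the gauge/cogauge set manipulations and the external sublinearity result, and makes transparent that the upward/downward interval structure coming from star-shapedness is what drives both inequalities. For item \Cref{item4.additive}, the paper cites Theorem 3.7 of \cite{moresco2023minkowski}, whereas your observation that additivity on a cone plus positive homogeneity makes $\D$ affine along segments of $C$ proves the claim in two lines with no outside input. One small repair: since $0\in\cone(\A^\complement)$ and $\f(0)=0$, your remark that membership in $\cone(\A^\complement)$ ``guarantees $\f(X)>0$'' is false at $X=0$, so the superadditivity argument needs the degenerate case stated separately; it is trivial, because $\f(0+Y)=\f(Y)=\f(0)+\f(Y)$, and the paper likewise disposes of the zero case by an explicit one-line remark.
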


\begin{proof}
\Cref{item1.additive} follows from \Cref{radially}. For \Cref{item2.additive}, we already have positive homogeneity from \Cref{radially} \Cref{lemma dev 4}. The star-shapedness of $\mathcal{A}$ and \Cref{coro cogauge} tells us that $\f = \cog_{\mathcal{A}^\complement}$. Hence, it suffices to show that $\cog_{\mathcal{A}^\complement}$ is a concave functional on $\cone (\mathcal{A}^\complement)$ whenever $\mathcal{A}^\complement$ is convex. To see that this is the case, let $B = \mathcal{A}^\complement$, and fix $\lambda\in[0,1]$ and $X,Y\in \cone (\mathcal{A}^\complement)$.
	Let us first consider the case where $0<\lambda<1$ and where both $X$ and $Y$ are nonzero. In this scenario, the sets
	\(
	\mathfrak{A} \coloneqq \{\alpha\in\R_+^*\colon\, \lambda X\in \alpha B\}
	\)
	and
	\(
	\mathfrak{B} \coloneqq \{\beta\in\R_+^*\colon\, (1-\lambda)Y\in \beta B\}
	\)
	are both non-empty (for instance, $X\in\cone(B)$ means precisely that $X = aZ$ for some $a>0$ and some non-zero $Z\in B$, and in this case we have $\lambda a \in \mathfrak{A}$). The positive homogeneity of $\f$ together with the equality $\f = \cog_B$, implies that $\sup\mathfrak{A} = \cog_B(\lambda X) = \lambda \cog_B(X)$ and $\sup\mathfrak{B} = \cog_B((1-\lambda)Y) = (1-\lambda)\cog_B(Y)$. Taking $\alpha\in\mathfrak{A}$ and $\beta\in\mathfrak{B}$, convexity of $B$ yields $\lambda X+(1-\lambda)Y \in (\alpha + \beta)B$, so $\cog_B (\lambda X + (1-\lambda)Y) \geq \alpha + \beta$. Therefore, $\cog_B (\lambda X + (1-\lambda)Y) \geq \sup\mathfrak{A} + \sup\mathfrak{B} = \lambda \cog_B (X) + (1-\lambda)\cog_B (Y) $. The remaining cases are just a matter of adapting the following argument: if, say, $\lambda X = 0$, then $\mathfrak{A}=\varnothing$ and $\cog_B(\lambda X + (1-\lambda Y)) = \cog_B((1-\lambda)Y) = (1-\lambda)\cog_B(Y) = \lambda\cog_B(X) + (1-\lambda)\cog_B(Y)$. 

 Regarding \Cref{item3.additive}, let $g$ be the restriction of $\f$ to the cone $C$, i.e.,\ $g \colon C \rightarrow [0,\infty]$ is such that $g (X) = \f(X) = \max\big(\f(X), \D_C(X)\big) = \D_{\mathcal{A} \cap C} (X)$ for all $X \in C$. It suffices to show that $g$ is additive; we shall proceed by showing that this function is concave and sub-linear. Sub-linearity of $g$ is yielded as $\mathcal{A} \cap C $ is a convex set containing the origin by assumption (see Theorem 3.2 in \cite{moresco2023minkowski} -- item (v)). Therefore $\D_{\mathcal{A}\cap C}$ is sub-linear on the whole $\X$, in particular when restricted to $C$. For concavity, we shall summon the cogauge to help us: as $\mathcal{A} $ is a star-shaped set, the gauge coincides with the cogauge of its complement, i.e.,\ $ \f = \cog_{\mathcal{A}^\complement}$ --- see \Cref{coro cogauge}. It follows that, for $X\in C$, one has $g (X) = \cog_{\mathcal{A}^\complement} (X)$. We now show that, for $X\in C$, the identity $\cog_{\mathcal{A}^\complement} (X) = \cog_{\mathcal{A}^\complement \cap C} (X)$ holds. 
 As $C^\complement\cup\{0\} $ is a cone and any cone is star-shaped, $\mathcal{A}\cup C^\complement$ is star-shaped, then we have $\forall X \in \mathcal{X}$ that
	\begin{align*}
	    \cog_{\mathcal{A}^\complement \cap C} (X) = \cog_{(\mathcal{A}\cup C^\complement)^\complement} (X)
	&= \D_{\mathcal{A} \cup C^\complement} (X)\\
	&= \min(\f(X) , \D_{C^\complement} (X) )
	= \min \big(\cog_{\mathcal{A}^\complement}(X),\cog_{C}(X)\big).
	\end{align*}
	In particular, $g = \cog_{\mathcal{A}^\complement\cap C}$ on $C$, as $\cog_C (X) = \infty = \D_{C^\complement} (X)$ if $X \in C$ and $\cog_C (X) = 0=\D_{C^\complement} (X)$ if $ X \notin C$. Now, the only thing that is left to show is that the cogauge of a convex set is a concave function on $C$. This claim follows from \Cref{item2.additive} as it tells us that $\cog_{\mathcal{A}^\complement \cap C}$ is concave on $C\subseteq\cone (\mathcal{A}^\complement)$.

 For \Cref{item4.additive}, note that the restriction of $\mathcal{D}$ to $C$ is both convex and concave. Therefore, the convexity of both $\Acc{k}{\D} \cap C$ and $(\Acc{k}{\D})^\complement \cap C$ follows from Theorem 3.7 in \cite{moresco2023minkowski} -- item (v) and \Cref{coro cogauge}. 
\end{proof}

\begin{remark}
    \Cref{item4.additive} in the above Theorem can easily be relaxed to the following: if $\mathcal{D}$ is sub-(super-)additive in some convex cone $C$, then $ \Acc{k}{\D} \cap C$  ($(\Acc{k}{\D})^\complement \cap C$, respectively) is a convex set. Unfortunately, \Cref{item2.additive,item3.additive} of \Cref{additive}  cannot be relaxed so as to accommodate the superlinearity of $\f$ on the whole domain. Consider the following counterexample,  illustrated in \Cref{fig counter concave}:
 % illustrated in Figure \ref{fig counter concave}
 let $\Omega = \{0,1\}$ be the binary market and identify $L^0\equiv\R^2$ as usual. Let $\mathcal{A}\coloneqq\{(x,y)\in\R^2\colon\, y-\vert x\vert\leq1\}$. In this case, the set $C\coloneqq \mathcal{A} \setminus \cone (\mathcal{A}^\complement)$ is a cone and hence, for any $X \in C$, we have that $\f(X) = 0$, whereas $\f(X)>0$ for $X\notin C$. We denote, respectively, by $\bd C$ and $\interior C$  the interior and boundary of $C$. Now let $Y = (1,\nicefrac12)\in \interior C$, $Z = (1,1)\in \bd C$ and $W = (1,2)\in\bd \mathcal{A}$. We have 
	$\f(Z) =0< \f(W)$, but $Z$ is a convex combination of $W$ and $Y$, so $\f$ is not concave on the whole domain.
	 However, if we are willing to abandon the identity $\f = \cog_{\mathcal{A}^\complement}$, it is possible to define the cogauge in a slightly different way by assigning the value $\cog_B(X)\coloneqq -\infty$ whenever $\{m \in \R_+ \colon\, m^{-1} X \in B\} = \varnothing $; in this case, an easy adaptation yields the concavity of $\cog_B$ for convex $B$. 
	\end{remark}

\begin{figure}[h!]
	\caption{A star-shaped set $A$ (in gray) with convex complement for which $\f$ is not concave.}\label{fig counter concave}
	\begin{center}
		\begin{tikzpicture}[scale=1]
		
		\clip (0,0) (-4,-2)rectangle (4,4);
		\fill[fill=black!20, fill opacity=.8, rotate = 0](0,0) (-4,-4)rectangle (4,4);
		\filldraw[ fill=white!20, rotate = 0] (-4,5)--(0,1)-- (4,5);
		
		\filldraw[ fill=red!20, fill opacity=.50, dashed] (-5,5)--(0,0)-- (5,5);
		
		\fill (1,3) circle [radius=.0cm] node[anchor=south west,scale=.8]{$\A^\complement$};
		\fill (1,-1) circle [radius=.0cm] node[anchor=south west,scale=.8]{$\A$};
		\fill (2.8,3.5) circle [radius=.0cm] node[anchor=north,scale=.8, rotate=45]{$\cone (\A^\complement)$};
		%\draw[dashed, red](-1,-0.5)--(1,1.5);
		
		\fill (0,0) circle [radius=.05cm] node[anchor=north west,scale=.8]{$0$};
		
		\fill (1,1) circle [radius=.05cm] node[anchor=west,scale=.8]{$Z$};
		\fill (1,2) circle [radius=.05cm] node[anchor=west,scale=.8]{$W$};
		\fill (1,0.5) circle [radius=.05cm] node[anchor=west,scale=.8]{$Y$};
		
		\draw[thin,<->] (-4,0) -- (4,0);
		\draw[thin,<->] (0,-2) -- (0, 4);
		\end{tikzpicture}
	\end{center}
\end{figure}
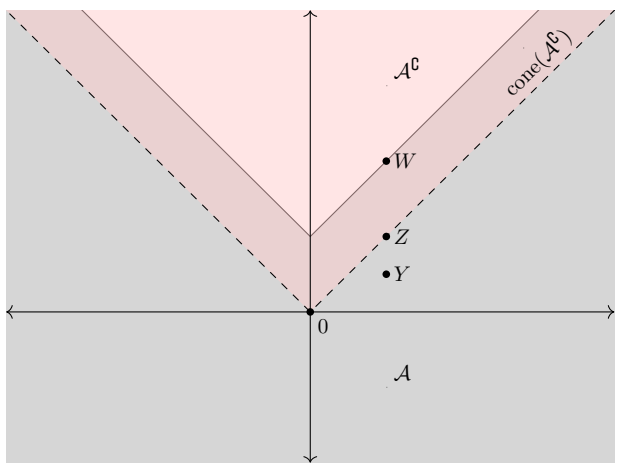

We are now in condition to prove the main result in this section.

\begin{theorem}\label{coro como}
We have the following:
\begin{enumerate}
    \item \label{item1.como} Consider an acceptance set $\mathcal{A}\subseteq\X$ being radially bounded at non-constants, stable under scalar addition, and assume both $\mathcal{A}$ and $\mathcal{A}^\complement$ be comonotonic convex. Then $\mathcal{A}$ is star-shaped and $\f$ a comonotonic additive deviation measure. 
  \item \label{item2.como} Let $\mathcal{D}$ be a deviation measure that is comonotonic additive. Then both $\Acc{k}{\D}$ and $\Acc{k}{\D}^\complement$ are comonotonic convex.  
\end{enumerate}
\end{theorem}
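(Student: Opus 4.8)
The plan is to handle the two items separately, in each case transporting the statement about an arbitrary comonotonic pair onto the comonotonic convex cone $C_{X,Y}=\conv(\cone(\{X\}\cup\{Y\}))$ furnished by \Cref{lemma convex cone comono}, where the appropriate part of \Cref{additive} can be invoked. For \Cref{item1.como} I would first prove star-shapedness, which reduces to showing $0\in\mathcal A$: once this holds, any $X\in\mathcal A$ and $\lambda\in[0,1]$ give $\lambda X=\lambda X+(1-\lambda)0\in\mathcal A$ by comonotonic convexity of $\mathcal A$, since $X$ and $0$ are comonotonic. Because $\mathcal A+\R=\mathcal A$, the set $\mathcal A\cap\R$ is either empty or all of $\R$, so it suffices to exclude $\mathcal A\cap\R=\varnothing$. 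If $\mathcal A$ contained only constants, stability would force $\mathcal A=\R\ni 0$; otherwise fix a non-constant $X\in\mathcal A$ and suppose $0\in\mathcal A^\complement$, so $\R\subseteq\mathcal A^\complement$. Radial boundedness supplies $\delta_X\in(0,\infty)$ with $\delta_X X\in\mathcal A^\complement$, and since $0,\delta_X X\in\mathcal A^\complement$ are comonotonic, comonotonic convexity of $\mathcal A^\complement$ places the segment $\{tX:0\le t\le\delta_X\}$ in $\mathcal A^\complement$; as $X\in\mathcal A$ this forces $\delta_X<1$, whence radial boundedness at $\delta=1\ge\delta_X$ gives $X\notin\mathcal A$, a contradiction. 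Thus $0\in\mathcal A$, $\mathcal A$ is a Minkowski acceptance set, and $\f$ is a deviation measure by \Cref{lemma dev 4} of \Cref{radially}.

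To obtain comonotonic additivity of $\f$, fix a comonotonic pair $X,Y$. If either is constant, translation insensitivity together with the vanishing of $\f$ on constants yields $\f(X+Y)=\f(X)+\f(Y)$ at once, so assume both are non-constant. I would take $C=C_{X,Y}$, a convex cone of pairwise comonotonic elements with $C\cap\R=\{0\}$ by \Cref{lemma convex cone comono}; the decisive point is the inclusion $C\subseteq\cone(\mathcal A^\complement)$, which holds because any non-constant $Z$ lies in $\cone(\mathcal A^\complement)$ (either $Z\in\mathcal A^\complement$, or $Z\in\mathcal A$ is non-constant and radial boundedness places a multiple $\delta Z$ in $\mathcal A^\complement$) while $0\in\cone(\mathcal A^\complement)$. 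Since all elements of $C$ are pairwise comonotonic, comonotonic convexity of $\mathcal A$ and of $\mathcal A^\complement$ makes $\mathcal A\cap C$ and $\mathcal A^\complement\cap C$ convex, so \Cref{item3.additive} of \Cref{additive} applies and yields $\f(X+Y)=\f(X)+\f(Y)$, as $X,Y\in C$.

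For \Cref{item2.como} I would again fix a comonotonic pair $X,Y$ and work on the convex cone $C=C_{X,Y}$ of pairwise comonotonic elements. Comonotonic additivity of $\mathcal D$ makes $\mathcal D$ additive on $C$, so \Cref{item4.additive} of \Cref{additive} gives that $\Acc{k}{\D}\cap C$ and $\Acc{k}{\D}^\complement\cap C$ are convex. For a comonotonic pair lying in $\Acc{k}{\D}$ (respectively $\Acc{k}{\D}^\complement$), both members also lie in $C$, hence in the corresponding convex intersection, so every convex combination remains in $\Acc{k}{\D}$ (respectively $\Acc{k}{\D}^\complement$); this is precisely comonotonic convexity of the two sets.

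I expect the genuine obstacle to be establishing $0\in\mathcal A$, and hence star-shapedness, in \Cref{item1.como}. In the monetary section this came for free from normalization, but here no such axiom is available, and \Cref{lemma convex cone comono.item3} shows that comonotonic convex combinations of non-constant comonotonic variables remain non-constant, so $0$ cannot be produced from within $\mathcal A$; the inclusion must instead be wrung out indirectly from the interaction of radial boundedness with comonotonic convexity of the complement. Everything afterward—the inclusion $C_{X,Y}\subseteq\cone(\mathcal A^\complement)$ and the convexity of the intersections—is routine once \Cref{lemma convex cone comono} and \Cref{additive} are in hand.
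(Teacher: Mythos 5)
Your proof is correct and follows essentially the same route as the paper: star-shapedness via comonotonic convex combinations with $0$, then reduction to the cone $C_{X,Y}$ from \Cref{lemma convex cone comono}, verification that $C_{X,Y}\subseteq\cone(\mathcal{A}^\complement)$ via radial boundedness, and application of \Cref{item3.additive} (resp.\ \Cref{item4.additive}) of \Cref{additive}; your treatment of \Cref{item2.como} is the paper's argument verbatim.

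Two points where you are more careful than the paper deserve mention. First, the paper's star-shapedness argument applies comonotonic convexity to the pair $(X,0)$, which silently requires $0\in\mathcal{A}$ --- a fact that is not among the stated hypotheses (a Minkowski acceptance set need not be normalized, and stability under scalar addition alone only yields $0\in\mathcal{A}$ when $\mathcal{A}$ contains a constant). You correctly identify this as the genuine obstacle, and your contradiction argument supplies exactly the missing step: if $0\in\mathcal{A}^\complement$, then comonotonic convexity of $\mathcal{A}^\complement$ applied to the pair $(0,\delta_X X)$ places the segment $\{tX:0\le t\le\delta_X\}$ in $\mathcal{A}^\complement$, forcing $\delta_X<1$, which by radial boundedness contradicts $X\in\mathcal{A}$. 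Second, the paper establishes additivity only for comonotonic pairs of non-constants; you dispose of the case where one member is constant by translation insensitivity and the vanishing of $\f$ on constants, which is needed for full comonotonic additivity. Both repairs are sound, so your write-up is, if anything, a more complete version of the paper's own proof.
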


\begin{proof}
For \Cref{item1.como}, the star-shapedness of $\mathcal A$ follows for $0$ is comonotonic to any $X \in \mathcal{A}$ and, by assumption, $\mathcal{A}$ is convex for this pair. Therefore, $\lambda X\equiv \lambda X + (1-\lambda) 0 \in \mathcal{A}$ for all $X \in \mathcal{A}$ and any $0\leq\lambda\leq1$, which establishes star-shapedness. Furthermore, as $\mathcal{A}$ is radially bounded at non-constants, it follows that $\cone (\mathcal{A}^\complement) = ( \X\setminus \R) \cup \{0\}$ and so any cone with no constants that we may take is contained in $\cone(\mathcal{A}^\complement)$. Now let $X$ and $Y$ be a comonotonic pair of non-constants. Note that any two members of the set $C_{X,Y} = \conv (\cone (\{X\} \cup \{Y\} ))$ are comonotonic to one another and the only constant in $C_{X,Y}$ is $0$ (see \Cref{lemma convex cone comono}). Now, if we take any $Z,W \in C_{X,Y} \cap \mathcal{A}$, as they are a comonotonic pair, by assumption we have that $\lambda Z + (1-\lambda)W \in C_{X,Y} \cap \mathcal{A},\; \forall \lambda \in [0,1]$. Hence, $ C_{X,Y} \cap \mathcal{A}$ is a convex set. The same argument shows that $ C_{X,Y} \cap \mathcal{A}^\complement$ is also convex. Thus, by  \Cref{additive}, we have that $\f(X+Y) = \f(X) + \f(Y)$. To conclude the first item, notice that $\f$ is a deviation measure because $\mathcal{A}\subseteq\X$ is a Minkowski acceptance set (\Cref{radially}).

For \Cref{item2.como}, let $X,Y$ be a comonotonic pair. Due to Lemma \ref{lemma convex cone comono}, the set $C_{X,Y}$
is a convex cone whose members are all comonotonic to one another, and $\mathcal{D}$ is additive on $C_{X,Y}$. By \Cref{additive} \Cref{item4.additive}, the sets $\Acc{k}{\D}\cap C_{X,Y}$ and $(\Acc{k}{\D})^\complement \cap C_{X,Y}$ are both convex. In particular, if $Z$ is any convex combination of $X$ and $Y$, then $Z\in \Acc{k}{\D}\cap C_{X,Y}\subseteq \Acc{k}{\D}$ whenever $X,Y \in \Acc{k}{\D}$, and similarly $Z\in (\Acc{k}{\D})^\complement$ whenever $X,Y \in (\Acc{k}{\D})^\complement$.
\end{proof}

\begin{remark}
	If the conditions above are imposed only on $\mathcal{A}$ (and not necessarily on $\mathcal{A}^\complement$), then we have that $\f$ is comonotonic convex. Similarly, if we only impose those conditions on $\mathcal{A}^\complement$, then the resulting $\f$ is comonotonic concave. The converse implications also hold. As an example of a set $\mathcal{A}$ satisfying the assumptions in the theorem, take $\Omega = \{0,1\}$, identify $L^0\equiv\R^2$, and let $\mathcal{A}$ be the set of those $X=(u,v)\in\R^2$ for which $u\geq0$, $v\geq0$ and $\vert u\vert + \vert v\vert \leq 1$. In this case, the set of comonotonic pairs in the 1st quadrant is precisely $\{(u,v)\in\R_+^2\colon\,u\geq v\}$.
\end{remark}

\bibliography{paper.bib}
\bibliographystyle{apalike}

\end{document}